\documentclass[a4 paper, 12pt]{article}

\usepackage{arxiv}

\usepackage[utf8]{inputenc} 
\usepackage[T1]{fontenc}    
\usepackage{hyperref}       
\usepackage{url}            
\usepackage{booktabs}       
\usepackage{amsmath, amsfonts, amssymb, amsthm, mathtools}       
\usepackage{nicefrac}       
\usepackage{microtype}      
\usepackage{lipsum}
\usepackage{graphicx}
\usepackage{wrapfig}
\usepackage{subcaption}
\usepackage{tikz}
\usepackage[inline]{enumitem}
\usepackage{float}
\usepackage{algorithm}
\usepackage{algpseudocode}

\usepackage[backend=bibtex,style=alphabetic,natbib=true, maxbibnames=99]{biblatex}
\addbibresource{bibliography4PlanarGames}
\usepackage[autostyle=true]{csquotes}

\newtheorem{theorem}{Theorem}[section]

\newtheorem{lemma}[theorem]{Lemma}

\theoremstyle{definition}
\newtheorem{definition}[theorem]{Definition}

\theoremstyle{remark}

\def\nat{\mathbb N}

\def\nonnegreal{\mathbb R_{\ge 0}}
\def\positivereal{\mathbb R_{> 0}}

\def\rat{\mathbb Q}

\def\int{\mathbb Z}
\def\nonnegint{\mathbb Z_{\ge 0}}
\def\trans{\mathsf{T}}
\DeclareMathOperator{\supp}{supp}
%
\def\cc#1{\mathtt{#1}}
\def\P{\ensuremath{\cc{P}}}

\DeclareMathOperator{\NP}{\cc{NP}}

\DeclareMathOperator{\PPAD}{\cc{PPAD}}
%
\usepackage{xspace}
\def\problem#1{\textsc{#1}\xspace}
\def\ThreeSAT{\problem{3-SAT}}
\def\BimatrixGame#1#2{$\langle #1, #2 \rangle$-\problem{Bimatrix Game}}
\def\PlanarBimatrixGame#1#2{$\langle #1, #2 \rangle$-\problem{Planar Bimatrix Game}}
\def\UndominatedOutRegular#1#2{$\langle #1, #2 \rangle$-\problem{Undominated OutRegular Subgraph}}
\def\UndominatedOutRegularPlanar#1#2{$\langle #1, #2 \rangle$-\problem{Undominated OutRegular Subgraph on Planar Graph}}

\title{NP-hardness of Computing Uniform Nash Equilibria on Planar Bimatrix Games}
\shorttitle{On Computing Uniform Nash Equilibria on Planar Bimatrix Games}

\author{
 Takashi Ishizuka\\
  Graduate School of Mathematics,\\
  Kyushu University,\\
  744 Motooka, Nishi-ku, Fukuoka, Japan\\
  \texttt{ishizuka.takashi.664@s.kyushu-u.ac.jp} \\
   \And
 Naoyuki Kamiyama\\
  Institute of Mathematics for Industry,\\
  Kyushu University,\\
  744 Motooka, Nishi-ku, Fukuoka, Japan\\
  \texttt{kamiyama@imi.kyushu-u.ac.jp} \\
}

\begin{document}
\maketitle
\begin{abstract}
	We study the complexity of computing a uniform Nash equilibrium on a non-win-lose bimatrix game. It is known that such a problem is $\NP$-complete even if a bimatrix game is win-lose (Bonifaci et al., 2008).
	Fortunately, if a win-lose bimatrix game is planar, then uniform Nash equilibria always exist. We have a polynomial-time algorithm for finding a uniform Nash equilibrium of a planar win-lose bimatrix game (Addario-Berry et al., 2007). The following question is left: How hard to compute a uniform Nash equilibrium on a planar non-win-lose bimatrix game?
	This paper resolves this issue. We prove that the problem of deciding whether a non-win-lose planar bimatrix game has uniform Nash equilibrium is also $\NP$-complete.
\end{abstract}

\keywords{Computational Complexity \and NP-completeness \and Bimatrix Game \and Uniform Nash equilibrium}

\section{Introduction}
\label{SectionIntroduction}

The research interaction between Game Theory and Theoretical Computer Science has helped to study the computational issues underlying fundamental game-theoretical notions.
A seminal topic of these research streams is to clarify the complexity of computing Nash equilibria in multi-player non-cooperative games.

The results of the last two decades unravel the complexity of computing Nash equilibria. It has been known that the complexity of the problem of finding one Nash equilibrium on a finite strategic-form game belongs between $\P$ and $\NP$. In particular, Papadimitriou \cite{Pap94a} introduced the complexity class $\PPAD$ and proved that such a problem is $\PPAD$-complete. 
Unfortunately, it is also $\PPAD$-complete to compute a Nash equilibrium even if the number of players is constant. Daskalakis et al.\ \cite{DGP09} have shown the $\PPAD$-completeness of Nash equilibrium computation on a three-player game. Chen and Deng \cite{CD06} have proven that the problem of computing a Nash equilibrium for a two-player strategic-form game is $\PPAD$-complete.

The most important result is the intractability for computing Nash equilibria for two-player strategic-form games such that both players gain only a unit reward at most \cite{AKV05}. Although the Lemke-Howson algorithm \cite{LH64} is known as the best algorithm for finding a Nash equilibrium, it has been known that there is an example that that algorithm has an exponential worst-case running time \cite{SS04}.

Motived from the above negative facts, some studies have concentrated on the complexity of computing specific classes of equilibria, such as pure, mixed, or correlated equilibria \cite{FPT04, Pap05}.
This paper focuses on the uniform Nash equilibria, i.e., Nash equilibria consisting of strategies in which strategies are played according to a uniform distribution. A uniform Nash equilibrium is a kind of mixed equilibrium. However, unlike mixed equilibria, uniform Nash equilibria do not always exist.
Bonifaci et al.\ \cite{BIL08} have shown that it is $\NP$-complete to decide whether a given two-player game has a uniform Nash equilibrium even if every element of payoff matrices is either $0$ or $1$. Note that we call such a game a win-lose bimatrix game.

These hardness results were based on a graph-theoretical approach. There is a one-to-one correspondence between a win-lose bimatrix game and a bipartite digraph. Addario-Berry et al.\ \cite{AOV07} have shown that a win-lose bimatrix game corresponding to a planar bipartite digraph always has a uniform Nash equilibrium, and we can find it in polynomial time.

\subsection{Our Results}
\label{SectionOurResults}
This paper sharpens the tractability-intractability boundary for the problem of computing uniform Nash equilibria on bimatrix games (not necessarily win-lose). We focus on a bimatrix game that corresponds to an edge-weighted planar bipartite digraph; we call such a game a planar bimatrix game. Furthermore, we discuss the following two unsolved issues:
\begin{enumerate}[label = (\arabic*)]
	\item Does a planar bimatrix game always have a uniform Nash equilibrium?
	\item Can we efficiently compute a uniform Nash equilibrium on a planar bimatrix game (find out if it exists)?
\end{enumerate}
This paper presents the negative answers to the above issues. We prove that the problem of deciding whether a planar bimatrix game has uniform Nash equilibria is $\NP$-complete --- this result implies that uniform Nash equilibria do not always exist. More precisely, we state that the problem of computing a uniform Nash equilibrium on a non-win-lose planar bimatrix game is also $\NP$-complete.

\paragraph{Technical Overview}
Our proof shown in Section \ref{SectionPlanarBimatrixGames} is inspired by the proof techniques used in \cite{AKV05,BIL08,CS05}.
We extend their graph-theoretical approach to one on edge-weighted digraphs; we present a one-to-one correspondence between bimatrix games and edge-weighted bipartite digraphs and characterize a uniform Nash equilibrium by a graph-theoretical notion. 
The key technique in proving the main result is the construction of the clause-variable gadget, shown in Figure \ref{FigureClauseVariableGadget}.

\section{Preliminaries}
\label{SectionPreliminaries}
Firstly, we define basic notations.
We denote by $\nat$, $\rat$, $\nonnegint$, $\nonnegreal$, and $\positivereal$ the sets of positive integers, rational numbers, non-negative integers, non-negative real numbers, and positive real numbers, respectively.
We use $[n] := \{ 1, 2, \dots, n \}$ for $n \in \nat$.
For a given digraph $G$, we denote by $V(G)$ and $E(G)$ the sets of vertices and directed edges on $G$, respectively.
For sets $A$ and $B$, we denote by $A \sqcup B$ the disjoint union of $A$ and $B$.

\subsection{Bimatrix Games}
\label{SectionBimatrixGames}
A bimatrix game is a two-player strategic form game where both players have a finite set of strategies. We call the first (resp. second) player the {\it row} (resp. {\it column}) player.
Here, we denote by $R$ and $C$ the sets of strategies for the row and column player, respectively.
A bimatrix game is specified by non-negative real matrices $M_{R}, M_{C} \in \nonnegreal^{R \times C}$; the rows and columns of both matrices are indexed by the pure strategies of the players.

Let $\rho_{R}$ and $\rho_{C}$ be positive integers. 
We call a bimatrix game $(M_{R}, M_{C})$ a $\langle \rho_{R}, \rho_{C} \rangle$-bimatrix game if there are subsets $\mathcal{A} \subseteq \rat \setminus \{ 0 \}$ with $| \mathcal{A} | = \rho_{R}$ and $\mathcal{B} \subseteq \rat \setminus \{ 0 \}$ with $| \mathcal{B} | = \rho_{C}$ such that $M_{R} \in \left( \mathcal{A} \sqcup \{ 0 \} \right)^{R \times C}$, and $M_{C} \in \left( \mathcal{B} \sqcup \{ 0 \} \right)^{R \times C}$.
Namely, a win-lose bimatrix game is a $\langle 1, 1 \rangle$-bimatrix game.

A mixed strategy is a probability distribution over pure strategies, i.e., a vector $x_{R} \in \nonnegreal^{R}$ such that $\sum_{r_i \in R} x_{R}(r_i) = 1$ is a mixed strategy of the row player. Similarly, a vector $x_{C} \in \nonnegreal^{C}$ such that $\sum_{c_j \in C} x_{C}(c_j) = 1$ is a mixed strategy of the column player.
For a mixed strategy $x$, the support $\supp(x)$ of $x$ is the set of pure strategies $i$ such that $x(i) > 0$.
A mixed strategy $x$ is said to be {\it uniform} if for every $i \in \supp(x)$, $x(i) = 1 / |\supp(x)|$.

When the row player plays a mixed strategy $x_R$ and the column player plays a mixed strategy $x_C$, the expected payoffs for the row player and the column player is $x_R^{\trans} M_R x_C$ and $x_{R}^{\trans} M_C x_C$, respectively.
A Nash equilibrium of the bimatrix game $(M_R, M_C)$ is a pair of mixed strategies $(x_R^*, x_C^*)$ satisfying for all mixed strategies $x_R \in \nonnegreal^R$ of the row player, $(x_R^*)^{\trans} M_R x_C^* \ge x_R^{\trans} M_R x_C^*$, and for all mixed strategies $x_C \in \nonnegreal^C$ of the column player, $(x_R^*)^{\trans} M_C x_C^* \ge (x_R^*)^{\trans} M_C x_C$.
We call a Nash equilibrium $(x_R^*, x_C^*)$ a uniform Nash equilibrium if both mixed strategies are uniform.

\subsection{Edge-Weighted Digraphs Induced by Bimatrix Games}
\label{SectionDigraphfromBimatrixGame}
We now describe the one-to-one correspondence between bimatrix games and edge-weighted bipartite digraphs.
Note that our notation and definition used in this paper are based on \cite{AOV07, BIL08}; these papers only dealt with unweighted digraphs.
We generalize their notion to edge-weighted digraphs.

\paragraph{One-to-One Correspondence}
Let $(M_R, M_C)$ be a bimatrix game, where the sets of strategies for the row player and the column player are $R$ and $C$, respectively.
Now, we define the edge-weighted bipartite digraph $G = (V, E, w)$ induced by the bimatrix game $(M_R, M_C)$.
Here, $w : E \to \positivereal$ represents weights on edges.
The set $V$ of vertices is $R \sqcup C$. We have an arc $(r_i, c_j)$ if $M_R(r_i, c_j) > 0$, and the weight on the arc $(r_i, c_j)$ is equal to $M_R(r_i, c_j)$. Similarly, we have an arc $(c_j, r_i)$ if $M_C(r_i, c_j) > 0$, and the weight on the arc $(c_j, r_i)$ is equal to $M_C(r_i, c_j)$.

The bimatrix game $(M_R, M_C)$ is a planar bimatrix game if the edge-weighted digraph $G$ is induced by $(M_R, M_C)$ is a planar graph.
Furthermore, we call an edge-weighted bipartite digraph $G = (R \sqcup C, E, w)$ a $\langle \rho_{r}, \rho_{c} \rangle$-edge-weighted bipartite digraph if there are two sets $\mathcal{A} \subseteq \rat \setminus \{ 0 \}$ with $| \mathcal{A} | = \rho_{r}$ and $\mathcal{B} \subseteq \rat \setminus \{ 0 \}$ with $| \mathcal{B} | = \rho_{C}$ such that each edge from $R$ to $C$ has the weight within $\mathcal{A}$ and each edge from $C$ to $R$ has the weight within $\mathcal{B}$.

Remark that it is possible that the reverse transformation. Thus, we can construct a bimatrix game from an edge-weighted bipartite digraph.

Let $G = (V, E, w)$ be an edge-weighted bipartite digraph.
Since $G$ is bipartite, we can naturally separate the set of vertices $V$ into two disjoint subsets $R$ and $C$.
We regard $R$ and $C$ as the sets of pure strategies for the row and column players, respectively.
The payoff matrix of the row player $M_{R}$ is defined as follows:
For each $r_{i} \in R$, $M_{R}(r_{i}, c_{j}) = w(r_{i}, c_{j})$ if $(r_{i}, c_{j}) \in E$; otherwise, $M_{R}(r_{i}, c_{j}) = 0$.
The definition of the payoff matrix of the column player $M_{C}$ is similar, i.e., for each $c_{j} \in C$, $M_{C}(r_{i}, c_{j}) = w(c_{j}, r_{i})$ if $(c_{j}, r_{i}) \in E$; otherwise, $M_{C}(r_{i}, c_{j}) = 0$.

From the above observation, we obtain the following statement:
\begin{lemma}
\label{LemmaCorrespondenceBetweenBimatrixGame&Digraph}
	For every bimatrix game, there is a corresponding edge-weighted bipartite digraph; vice versa.
\end{lemma}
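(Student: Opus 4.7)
The plan is to observe that both directions of the correspondence have already been described constructively in the preceding paragraphs, so the task reduces to verifying that these two constructions are well-defined and mutually inverse. First I would set up notation: given a bimatrix game $(M_R,M_C)$ with strategy sets $R$ and $C$, call $\Phi(M_R,M_C)$ the edge-weighted bipartite digraph on vertex set $R\sqcup C$ obtained by placing an arc $(r_i,c_j)$ with weight $M_R(r_i,c_j)$ whenever $M_R(r_i,c_j)>0$, and symmetrically an arc $(c_j,r_i)$ with weight $M_C(r_i,c_j)$ whenever $M_C(r_i,c_j)>0$. In the opposite direction, given an edge-weighted bipartite digraph $G=(R\sqcup C,E,w)$ with the stipulated bipartition, let $\Psi(G)$ be the bimatrix game whose matrices $M_R,M_C$ are defined by the two case distinctions in the excerpt.

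Next I would check that $\Phi$ and $\Psi$ are well-defined. For $\Phi$, the resulting digraph is bipartite (every arc goes between $R$ and $C$), and its edge weights are positive reals because they are precisely the positive entries of $M_R$ and $M_C$, which belong to $\nonnegreal$. For $\Psi$, the resulting matrices lie in $\nonnegreal^{R\times C}$ because each entry is either a positive weight or $0$.

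The core step is to verify $\Psi\circ\Phi = \mathrm{id}$ and $\Phi\circ\Psi = \mathrm{id}$. Fix a bimatrix game $(M_R,M_C)$ and let $(M_R',M_C')=\Psi(\Phi(M_R,M_C))$. For any pair $(r_i,c_j)$, if $M_R(r_i,c_j)>0$ then $\Phi$ produces an arc $(r_i,c_j)$ of weight $M_R(r_i,c_j)$, so $M_R'(r_i,c_j)=M_R(r_i,c_j)$; if $M_R(r_i,c_j)=0$ then no such arc exists in $\Phi(M_R,M_C)$ and hence $M_R'(r_i,c_j)=0=M_R(r_i,c_j)$. The same argument handles $M_C$. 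The reverse composition is analogous: an arc $(r_i,c_j)\in E$ with weight $w(r_i,c_j)$ produces a positive entry $M_R(r_i,c_j)=w(r_i,c_j)$ in $\Psi(G)$, which in turn reinstates exactly the same arc under $\Phi$, while the absence of an arc corresponds to a zero entry and hence to no arc being produced.

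I do not anticipate a genuine obstacle, since the lemma is essentially bookkeeping; the only subtlety worth highlighting is that the convention $w:E\to\positivereal$ (strictly positive weights) is what makes the correspondence a bijection — it ensures that zero entries of the payoff matrices are in one-to-one correspondence with missing arcs, so that no information is lost in either direction.
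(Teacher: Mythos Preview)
Your proposal is correct and aligns with the paper's approach. The paper does not give a formal proof of this lemma at all: it simply describes the two constructions $\Phi$ and $\Psi$ in the paragraphs preceding the statement and then asserts the lemma ``from the above observation.'' Your write-up just makes explicit the routine verification (well-definedness and mutual inverseness) that the paper leaves to the reader, including the observation that the convention $w:E\to\positivereal$ is what guarantees zero payoff entries correspond exactly to missing arcs.
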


\paragraph{Undominated Out-regular Subgraphs}
For a finite subset $S \subseteq V$, let $G[S]$ be the subgraph induced by $S$.
We denote by $\Gamma^{+}(v)$ the set of out-neighbors of a vertex $v \in V$, and by $\Gamma^{+}_{S}(v)$ the set of out-neighbors of a vertex $v \in V$ that are in $S \subseteq V$.

Fix arbitrary non-empty subset of strategies $S = S_R \sqcup S_C \subseteq V$, where $\emptyset \neq  S_{R} \subseteq R$ and $\emptyset \neq S_{C} \subseteq C$.
The induced subgraph $G[S]$ is $(\alpha, \beta)$ out-regular if the following two conditions (i) for every $r_i \in S_R$, $\sum_{c_j \in \Gamma^{+}_{S}(r_i)} w(r_i, c_j) = \alpha$; (ii) for every $c_j \in S_C$, $\sum_{r_i \in \Gamma^{+}_{S}(c_j)} w(c_j, r_i) = \beta$.
We say that $G$ has an out-regular subgraph if there is a triple $(\alpha, \beta, S)$ such that $G[S]$ is $(\alpha, \beta)$ out-regular.

A vertex $v \in V \setminus S$ dominates an $(\alpha, \beta)$ out-regular subgraph $G[S]$ if it satisfies that eighter $\sum_{c_j \in \Gamma^{+}_{S}(v)} w(v, c_j) > \alpha$ if $v \in R$; or $\sum_{r_i \in \Gamma^{+}_{S}(v)} w(v, r_i) > \beta$ if $v \in C$.
We say that an $(\alpha, \beta)$ out-regular subgraph $G[S]$ is undominated if there are no vertices that dominate $G[S]$.
We say that $G$ has an undominated out-regular subgraph if there is a triple $(\alpha, \beta, S)$ such that $G[S]$ is an undominated $(\alpha, \beta)$ out-regular subgraph.

\begin{theorem}
\label{TheoremEquivalence}
	A bimatrix game has a uniform Nash equilibrium if and only if the corresponding edge-weighted bipartite digraph has an undominated out-regular subgraph; vice versa.
\end{theorem}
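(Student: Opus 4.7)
The plan is to translate the Nash-equilibrium conditions for uniform strategies directly into the graph-theoretic conditions defining an undominated out-regular subgraph, exploiting the identification between the nonzero entries of $M_R, M_C$ and the weighted arcs of $G$. Throughout, I will use that if $x_R^*$ is uniform on $S_R \subseteq R$ then $x_R^*(r_i) = 1/|S_R|$, and similarly for $x_C^*$ on $S_C \subseteq C$, and I will set $S := S_R \sqcup S_C$.

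For the forward direction, I would start from a uniform Nash equilibrium $(x_R^*, x_C^*)$ and take $S_R := \supp(x_R^*)$, $S_C := \supp(x_C^*)$. Using the definition of the induced digraph, the row player's expected payoff for the pure strategy $r_i$ against $x_C^*$ equals
\[
\sum_{c_j \in C} M_R(r_i, c_j) \, x_C^*(c_j) \;=\; \frac{1}{|S_C|} \sum_{c_j \in \Gamma^{+}_{S}(r_i)} w(r_i, c_j).
\]
The standard best-response characterization of Nash equilibria says every pure strategy in $\supp(x_R^*)$ must achieve the same maximum payoff, while every pure strategy outside the support can only be weakly worse. Letting $\alpha/|S_C|$ denote this common maximum, I obtain $\sum_{c_j \in \Gamma^{+}_{S}(r_i)} w(r_i, c_j) = \alpha$ for every $r_i \in S_R$, and $\sum_{c_j \in \Gamma^{+}_{S}(v)} w(v, c_j) \le \alpha$ for every $v \in R \setminus S_R$. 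A symmetric computation for the column player yields the corresponding value $\beta$. This gives exactly the definitions of $(\alpha, \beta)$ out-regularity on $G[S]$ and of undominatedness at vertices outside $S$.

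For the reverse direction, I would start from an undominated $(\alpha, \beta)$ out-regular induced subgraph $G[S]$ with $S = S_R \sqcup S_C$ and define the uniform strategies $x_R^*(r_i) := 1/|S_R|$ for $r_i \in S_R$ and $x_C^*(c_j) := 1/|S_C|$ for $c_j \in S_C$ (zero elsewhere). Reversing the computation above, the row player's payoff for a pure strategy $r_i$ against $x_C^*$ equals $\bigl(\sum_{c_j \in \Gamma^{+}_{S}(r_i)} w(r_i, c_j)\bigr) / |S_C|$, which is equal to $\alpha/|S_C|$ for every $r_i \in S_R$ (by out-regularity) and at most $\alpha/|S_C|$ for every $r_i \notin S_R$ (by undominatedness). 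Taking a convex combination shows that no deviation $x_R$ can strictly improve the row player's payoff; the symmetric argument works for the column player. Hence $(x_R^*, x_C^*)$ is a Nash equilibrium, and it is uniform by construction.

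The main obstacle is conceptual rather than computational: I must ensure that the single scalar $\alpha$ extracted from the Nash property really equals the common weighted out-sum at every $r_i \in S_R$, which relies on the equal-payoff characterization of mixed Nash equilibria, and that the ``undominated'' inequality is stated in the $\ge 0$ weighted formulation rather than the unweighted out-degree of \cite{AOV07,BIL08}. Once the correspondence between expected payoffs and weighted out-sums is recorded cleanly, both implications reduce to rescaling by $1/|S_R|$ and $1/|S_C|$, and the theorem follows.
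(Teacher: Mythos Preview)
Your proposal is correct and follows essentially the same approach as the paper: both proofs identify the row player's expected payoff from a pure strategy $r_i$ against the uniform $x_C^*$ with $\tfrac{1}{|S_C|}\sum_{c_j\in\Gamma_S^+(r_i)} w(r_i,c_j)$, and then read off out-regularity from the equal-payoff property on the support and undominatedness from the best-response inequality outside it. The only cosmetic difference is that the paper argues the game\,$\Rightarrow$\,graph direction by contradiction (two support strategies with distinct out-sums would violate Nash), whereas you invoke the equal-payoff characterization directly.
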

\begin{proof}
	Let $(M_{R}, M_{C})$ be a bimatrix game, where the sets of strategies for the row player and the column player are $R$ and $C$, respectively.
	We denote by $G$ the edge-weighted bipartite digraph corresponding to $(M_{R}, M_{C})$.
	Fix a subset of pure strategies $S = S_{R} \sqcup S_{C}$, where $\emptyset \neq S_{R} \subseteq R$ and $\emptyset \neq S_{C} \subseteq C$.
	
	Now, we show that $G[S]$ is an undominated out-regular subgraph if and only if the pair of uniform strategies $(x_{R}, x_{C})$ such that $\supp(x_{R}) = S_{R}$ and $\supp(x_{C}) = S_{C}$ is a Nash equilibrium of $(M_{R}, M_{C})$.
	
	First, we assume that $G[S]$ is an undominated $(\alpha, \beta)$ out-regular subgraph form some positive reals $\alpha$ and $\beta$. Then, we show that the pair of uniform strategies $(x_{R}, x_{C})$ defined as $\supp(x_{R}) = S_{R}$ and $\supp(x_{C}) = S_C$ is a Nash equilibrium.
	When the column player plays the strategy $x_{C}$, the row player gains the expected payoff $\alpha / |S_{C}|$ by playing $r_{i} \in S_{R}$, but gains the expected payoff at most $\alpha / |S_{C}|$ by playing $r_{i'} \in R \setminus S_{R}$. Therefore, the uniform strategy is a best response to $x_{C}$ for the row player.
	Similarly, we can see that the uniform strategy $x_{C}$ is a best response to $x_{R}$ for the column player.
	Thus, $(x_{R}, x_{C})$ is a uniform Nash equilibrium.
	
	Next, we assume that the pair of uniform strategies $(x_{R}, x_{C})$ such that $\supp(x_{R}) = S_{R}$ and $\supp(x_{C}) = S_{C}$ is a Nash equilibrium. For the sake of contradiction, we suppose that there are strategies $r_{i}, r_{k} \in S_R$ such that $\sum_{c_{j} \in \Gamma_{S}^{+}(r_{i})} w(r_{i}, c_{j}) > \sum_{c_{\ell} \in \Gamma_{S}^{+}(r_{k})} w(r_{k}, c_{\ell})$ when the column player plays $x_{C}$.
	In this case, the row player gains the expected payoff $(1 / |S_{C}|) \sum_{c_{j} \in \Gamma_{S}^{+}(r_{i})} w(r_{i}, c_{j})$ by playing the strategy $r_{i}$. On the other hand, she gains the expected payoff $(1 / |S_{C}|) \sum_{c_{\ell} \in \Gamma_{S}^{+}(r_{k})} w(r_{k}, c_{\ell})$ by playing $r_{k}$.
	From our assumption, the row player will get a lager expected payoff for playing $r_{i}$ than for playing $r_{k}$, which contradicts that $(x_{R}, x_{C})$ is a Nash equilibrium. Therefore, $G[S]$ is an $(\alpha, \beta)$ out-regular subgraph for some positive reals $\alpha$ and $\beta$.
	Using a similar technique, it is easy to see that $G[S]$ is undominated. 
\end{proof}

Now, we show an important property of undominated out-regular subgraphs, which we use to prove the main result.
If the edge-weighted bipartite digraph $G = (V, E, w)$ is strongly connected, and a subset $S$ induces an undominated out-regular subgraph, then we have $\sum_{u \in \Gamma_{S}^{+}(v)} w(v, u) > 0$ for each $v \in S$.

\begin{lemma}
	\label{LemmaImportantProperty}
	Assume that the edge-weighted bipartite digraph $G = (V, E, w)$ is strongly connected, and a subset $S \subseteq V$ induces 	the undominated out-regular subgraph $G[S]$.
	Then, every vertex $v \in S$ satisfies that $\sum_{u \in \Gamma_{S}^{+}(v)} w(v, u) > 0$.
\end{lemma}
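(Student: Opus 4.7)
The plan is to argue by contradiction on the out-regularity constants. Let $G[S]$ be $(\alpha,\beta)$ out-regular with $S=S_R\sqcup S_C$, both parts non-empty by definition. I claim $\alpha>0$ (and symmetrically $\beta>0$); once this is established the conclusion is immediate, since every $v\in S_R$ satisfies $\sum_{u\in\Gamma^+_S(v)} w(v,u)=\alpha>0$ and every $v\in S_C$ satisfies the analogous identity with $\beta$.

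Suppose for contradiction that $\alpha=0$. Because the weight function takes values in $\positivereal$, the equation $\sum_{c_j\in\Gamma^+_S(r_i)} w(r_i,c_j)=0$ for $r_i\in S_R$ forces $\Gamma^+_S(r_i)\cap S_C=\emptyset$; that is, no arc of $G$ runs from $S_R$ to $S_C$. Now invoke the undominated hypothesis: for every $v\in R\setminus S_R$ we have $\sum_{c_j\in\Gamma^+_S(v)} w(v,c_j)\le \alpha=0$, which again by positivity of $w$ rules out any arc from $v$ into $S_C$. Combining the two observations, no vertex in $R$ has an out-arc into $S_C$.

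Here I use bipartiteness and strong connectivity to reach a contradiction. Since $G$ is bipartite with parts $R$ and $C$, any in-arc of a vertex $c\in S_C$ must originate in $R$; the previous step shows no such arc exists, so every vertex in $S_C$ has in-degree $0$ in $G$. Fix any $r\in S_R$ and any $c\in S_C$ (possible since both $S_R$ and $S_C$ are non-empty). By strong connectivity there is a directed path from $r$ to $c$ in $G$, and the last arc of this path enters $c$, contradicting that $c$ has no in-arcs. Hence $\alpha>0$, and the same argument with the roles of $R$ and $C$ swapped yields $\beta>0$.

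The argument is essentially a careful unpacking of the definitions, so I do not expect a real obstacle. The only points that require attention are (i) that ``undominated'' uses strict inequality, which together with $\alpha=0$ and positivity of weights forces the corresponding out-neighbourhoods in $S$ to be \emph{empty} rather than merely small, and (ii) that bipartiteness is what upgrades ``no arc from $R$ to $S_C$'' to ``no arc into $S_C$ at all'', which is the precise statement incompatible with strong connectivity.
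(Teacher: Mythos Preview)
Your proof is correct and follows essentially the same approach as the paper: both argue by contradiction, using positivity of the weights together with out-regularity and undominatedness to show that no arc can enter the ``other'' part of $S$, and then invoking strong connectivity to derive a contradiction. The only cosmetic difference is the order of presentation---you first rule out all arcs from $R$ into $S_C$ and then apply strong connectivity, whereas the paper first uses strong connectivity to produce a single arc $(v',u)$ into $S$ and then cases on whether $v'\in S$---but the underlying idea is identical.
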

\begin{proof}
	For the sake of contradiction, we assume that there is a vertex $v \in S$ such that	$\sum_{u \in \Gamma_{S}^{+}(v)} w(v, u) = 0$.
	We take a vertex $u \in S$ that belongs to the different part from $v$.
	Since the digraph $G$ is strongly connected, there is an arc $(v', u) \in E$.
	If $v' \in S$, it contradicts that $S$ induces an out-regular subgraph. On the other hand, if $v \in V \setminus S$, it contradicts that $G[S]$ has no dominating vertices.
\end{proof}

\subsection{A Brief Introduction to Computational Complexity}
\label{SectionComputationalComplexity}
We begin with a brief overview of various computational complexity-theoretical notions that will be needed for the remainder of this paper.
Let $\Sigma = \{ 0, 1\}$ be a binary alphabet. We denote by $\Sigma^{*}$ the set of finite strings.
A decision problem $L \subseteq \Sigma^{*}$ is defined as follows: given a string $x$, decide whether $x$ is in $L$.
A complexity class $\mathcal{C}$ is a set of decision problems. The complexity class $\NP$ is the set of all decision problems that are decidable by a non-deterministic Turing machine.

A reduction is a key concept of Computational Complexity Theory. A reduction from the problem $A$ to the problem $B$ is a function $f: \Sigma^{*} \to \Sigma^{*}$ such that a string $x \in A$ if and only if  $f(x) \in B$.
When such a function $f$ is computable by a deterministic Turing machine, we say that such a reduction is a polynomial-time reduction.
Furthermore, we say that two decision problems $A$ and $B$ are polynomially equivalent if there are polynomial-time reductions from $A$ to $B$ and from $B$ to $A$.
A decision problem $L$ is $\NP$-hard if all decision problems belonging to $\NP$ are polynomial-time reducible to $L$. A decision problem $L$ is $\NP$-complete if $L$ belongs to $\NP$ and is $\NP$-hard.
Note that the problem \ThreeSAT (see Definition \ref{Def:ProblemTreeSAT}) is the well-known $\NP$-complete problem \cite{Coo71,Lev73}.

\paragraph{The List of Decision Problems}
We conclude this brief introduction by resenting a list of decision problems that will be used in this paper.

\begin{definition}
\label{Def:ProblemTreeSAT}
	\ThreeSAT:

	\noindent
	\textsc{\bf Input:}
	\begin{itemize}
		\item a $3$-CNF formula $\phi: \{0, 1\}^{n} \to \{0, 1\}$.
	\end{itemize}
	
	\noindent 
	\textsc{\bf Task:} Decide whether
	\begin{itemize}
		\item there exists an assignment $x \in \{0, 1\}^{n}$ such that $\phi(x) = 1$.
	\end{itemize}
\end{definition}

\begin{definition}
	\BimatrixGame{\rho_{r}}{\rho_{c}}
	
	\noindent
	\textsc{\bf Input:}
	\begin{itemize}
		\item a $\langle \rho_{r}, \rho_{c} \rangle$-bimatrix game $(M_R, M_C)$
	\end{itemize}
	
	\noindent 
	\textsc{\bf Task:} Decide whether
	\begin{itemize}
		\item there is a uniform Nash equilibrium $(x_R, x_C)$ of $(M_R, M_C)$.
	\end{itemize}
\end{definition}

\begin{definition}
	\PlanarBimatrixGame{\rho_{r}}{\rho_{c}}
	
	\noindent
	\textsc{\bf Input:}
	\begin{itemize}
		\item a $\langle \rho_{r}, \rho_{c} \rangle$-planar bimatrix game $(M_R, M_C)$
	\end{itemize}
	
	\noindent 
	\textsc{\bf Task:} Decide whether
	\begin{itemize}
		\item there is a uniform Nash equilibrium $(x_R, x_C)$ of $(M_R, M_C)$.
	\end{itemize}
\end{definition}

\begin{definition}
	\UndominatedOutRegular{\rho_{r}}{\rho_{c}}
	
	\noindent
	\textsc{\bf Input:}
	\begin{itemize}
		\item a $\langle \rho_{r}, \rho_{c} \rangle$ edge-weighted bipartite digraph $G = (V, E, w)$
	\end{itemize}
	
	\noindent 
	\textsc{\bf Task:} Decide whether
	\begin{itemize}
		\item $G$ has an undominated out-regular subgraph.
	\end{itemize}
\end{definition}

\begin{definition}
	\UndominatedOutRegularPlanar{\rho_{r}}{\rho_{c}}
	
	\noindent
	\textsc{\bf Input:}
	\begin{itemize}
		\item a $\langle \rho_{r}, \rho_{c} \rangle$ edge-weighted bipartite planar digraph $G = (V, E, w)$
	\end{itemize}
	
	\noindent 
	\textsc{\bf Task:} Decide whether
	\begin{itemize}
		\item $G$ has an undominated out-regular subgraph.
	\end{itemize}
\end{definition}

\section{On the Complexity of $\langle 1, 2 \rangle$-Planar Bimatrix Games}
\label{SectionPlanarBimatrixGames}
The purpose of this section is to prove the $\NP$-hardness of the problem of deciding whether a given $\langle 2, 2 \rangle$-planar bimatrix game has a unique Nash equilibrium. So, we prove the next theorem.

\begin{theorem}
	\label{TheoremNP-comletenss4PlanarBimatrixGame}
	The problem \PlanarBimatrixGame{1}{2} is $\NP$-complete.
\end{theorem}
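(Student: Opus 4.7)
The proof splits into the standard two parts: membership in $\NP$ and $\NP$-hardness. Membership is immediate: a support pair $(S_R, S_C) \subseteq R \times C$ serves as a certificate, and by Theorem \ref{TheoremEquivalence} it suffices to verify in polynomial time that the induced subgraph $G[S_R \sqcup S_C]$ of the associated edge-weighted bipartite digraph is $(\alpha,\beta)$ out-regular for some $\alpha,\beta$ and is undominated, which reduces to computing weighted out-degrees at every vertex inside and outside $S$.

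For hardness, my plan is to reduce from \ThreeSAT. By Lemma \ref{LemmaCorrespondenceBetweenBimatrixGame&Digraph} and Theorem \ref{TheoremEquivalence}, it is enough to reduce \ThreeSAT to \UndominatedOutRegularPlanar{1}{2}. Given a 3-CNF formula $\phi$ on variables $x_1, \dots, x_n$ and clauses $c_1, \dots, c_m$, I construct a planar edge-weighted bipartite digraph $G_\phi$ in which every $R \to C$ arc has weight $1$ and every $C \to R$ arc carries a weight from a fixed two-element set $\{a, b\} \subseteq \positiverat$, with $a, b$ chosen so that $G_\phi$ admits an undominated out-regular subgraph iff $\phi$ is satisfiable. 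The construction is modular: a variable gadget for each $x_i$ with exactly two admissible local supports (coding $x_i = 0$ and $x_i = 1$); a clause gadget for each $c_j$ whose local support can be extended to an out-regular subgraph only when at least one incident literal vertex is selected; and the clause-variable gadget of Figure \ref{FigureClauseVariableGadget}, which glues the two in a planar way.

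The main technical obstacle is the simultaneous interaction of three constraints: planarity of the combined graph, the $\langle 1, 2\rangle$-weight restriction, and the global agreement of the out-regularity parameters $(\alpha, \beta)$ across all gadgets. Since only one weight is available on $R \to C$ arcs, all local choice and balancing must be encoded on the $C \to R$ side with only two weight values; at the same time, every vertex of the chosen support, whether it sits in a variable, clause, or clause-variable gadget, must realise the same weighted out-degree $\alpha$ (if in $R$) or $\beta$ (if in $C$). The two weights $a, b$ in the clause-variable gadget are calibrated precisely to absorb the discrepancy caused by clauses containing a variable number of selected literals, while still forcing every variable gadget to commit to exactly one of its two admissible states. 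Lemma \ref{LemmaImportantProperty} applied to the (strongly connected) $G_\phi$ rules out the degenerate case in which a selected vertex has weighted out-degree zero inside $S$.

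Correctness is then established in two directions. Given a satisfying assignment of $\phi$, switching each variable gadget to the state coding the assigned truth value and selecting, in each clause gadget, a literal vertex corresponding to a satisfied literal produces a support whose induced subgraph is, by direct inspection of the gadgets, $(\alpha, \beta)$ out-regular and undominated. Conversely, from any undominated out-regular subgraph $G_\phi[S]$ the local constraints of each variable gadget force $S$ to realise one of its two admissible states, thereby reading off an assignment; the local feasibility of each clause gadget then guarantees that at least one literal of every clause is made true by this assignment, so $\phi$ is satisfied.
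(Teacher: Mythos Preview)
Your high-level plan matches the paper's: membership in $\NP$ via a support certificate, then a reduction from \ThreeSAT to \UndominatedOutRegularPlanar{1}{2} combined with Lemma~\ref{LemmaCorrespondenceBetweenBimatrixGame&Digraph} and Theorem~\ref{TheoremEquivalence}. The gap is in how you describe the role of the clause-variable gadget of Figure~\ref{FigureClauseVariableGadget}. You present it as the piece that ``glues'' variable and clause gadgets together and whose two weights ``absorb the discrepancy caused by clauses containing a variable number of selected literals.'' Neither is what the gadget does. In the paper the reduction is two-stage: first one builds a \emph{non-planar} digraph $G_\phi$ from variable gadgets (Figure~\ref{FigureConstructionGadgets}) and clause gadgets (Figure~\ref{FigureClauseGadget}) connected directly by arcs $u_{j_k}\to\ell_{j_k}$, and Theorem~\ref{TheoremSATiffUndominatedOutregular} establishes that $\phi$ is satisfiable iff $G_\phi$ has an undominated out-regular subgraph. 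Only then is the clause-variable gadget inserted, and it is inserted \emph{at each crossing point} of the grid drawing of $G_\phi$ to produce a planar $H_\phi$; Lemma~\ref{LemmaPlanarDiGraph} shows this planarization preserves the existence of an undominated out-regular subgraph. The gadget is a crossing-elimination device, not a connector; its $(m+n)$-fold parallel paths and the extra arcs $(\alpha_2,\delta_1^1)$, $(\beta_2,\gamma_1^1)$ are engineered so that either of the two crossing arcs can be routed through while keeping the $(1,m+n)$ out-regularity and blocking domination by $\alpha_2$ or $\beta_2$.

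Relatedly, the out-regularity of each clause vertex $C_j$ (weighted out-degree $1$ inside $S$) forces \emph{exactly one} of $v_{j_1},v_{j_2},v_{j_3}$ into $S$, so there is no variable number of selected literals to absorb. The two weights are not free parameters $a,b$ but specifically $1$ and $m+n$: this choice makes the additional vertex $a$, with its $m+n$ unit-weight out-arcs, match the single weight-$(m+n)$ arcs such as $(x_i,y_i)$ and $(z_{i_2},y_i)$, which is what forces every out-neighbour of $a$ into $S$ and, via the potential of $z_{i_2}$ to dominate, enforces mutual exclusion of $x_i$ and $\bar x_i$. Your sketch would need to separate the non-planar correctness argument from the planarization step and make this weight mechanism explicit before it constitutes a proof.
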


It is trivial that the problem \PlanarBimatrixGame{1}{2} belongs to the class $\NP$. Thus, it suffices to prove that this problem is $\NP$-hard. For this, we present a polynomial-time reduction from \ThreeSAT to \PlanarBimatrixGame{1}{2}.

Recall that Theorem \ref{TheoremEquivalence} states that the existence of uniform Nash equilibria for a bimatrix game is equal to the existence of undominated out-regular subgraphs of the corresponding edge-weighted bipartite digraph. Moreover, the reduction between a bimatrix game and an edge-weighted bipartite digraph is polynomial-time computable and preserves planar. Therefore, these two problems \PlanarBimatrixGame{1}{2} and \UndominatedOutRegularPlanar{1}{2} are polynomially equivalent. 

To complete the proof of Theorem \ref{TheoremNP-comletenss4PlanarBimatrixGame}, we present a polynomial-time reduction from \ThreeSAT to \UndominatedOutRegular{1}{2}, i.e., we show the following lemma:

\begin{lemma}
	\label{LemmaReductionThreeSAT-UndominatedOutRegular}
	\ThreeSAT is polynomial-time reducible to \UndominatedOutRegularPlanar{1}{2}.
\end{lemma}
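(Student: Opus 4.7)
The plan is to reduce \ThreeSAT (through its planar variant, or equivalently by absorbing explicit crossover gadgets into the construction) to \UndominatedOutRegularPlanar{1}{2} by building, from a $3$-CNF formula $\phi$ on variables $x_{1},\dots,x_{n}$ and clauses $C_{1},\dots,C_{m}$, a $\langle 1, 2\rangle$ edge-weighted planar bipartite digraph $G_{\phi}$ whose undominated out-regular subgraphs correspond bijectively to satisfying assignments of $\phi$. The entire construction is organized around instantiations of the clause--variable gadget of Figure~\ref{FigureClauseVariableGadget}, one copy per literal-occurrence, glued along shared variable vertices along a planar embedding of the clause--variable incidence graph. The target parameters leave all $R\to C$ edges with weight $1$ and only two distinct positive rational weights, say $\mathcal{B}=\{\beta_{1},\beta_{2}\}$, on $C\to R$ edges; these will be chosen at the end so that all local out-regularity equations are satisfiable exactly in the intended configurations.

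First I would design the variable gadget for $x_{i}$: two parallel ``literal tracks'', one representing $x_{i}=1$ and the other $x_{i}=0$, each a small strongly connected bipartite subdigraph together with a connecting ``toggle'' component. The out-regularity parameter $\alpha$ at the $R$-side (which, since $R\to C$ weights are all $1$, is just a common out-degree) forces the track vertices to enter $S$ in a matched fashion, while the combination of the toggle and the weighted $C\to R$ arcs forces exactly one track to be chosen: Lemma~\ref{LemmaImportantProperty} rules out the empty selection once the gadget is strongly connected, and a weight-based counting argument at the toggle rules out the simultaneous selection of both tracks. This yields a well-defined Boolean value per variable.

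Next I would specify the clause gadget for $C_{j}$: a small bipartite subdigraph attached via the clause--variable gadget to the tracks of the three literals $\ell_{j,1},\ell_{j,2},\ell_{j,3}$. The two weights $\beta_{1},\beta_{2}$ are tuned so that the equation imposed by out-regularity at the ``clause vertex'' on the $C$-side has a solution matching the contributions of the literal tracks if and only if at least one such track is in $S$; a completely unsatisfied clause either makes the $\beta$-equation infeasible or creates an outside vertex violating the undominatedness condition of Section~\ref{SectionDigraphfromBimatrixGame}. With these ingredients, the correctness proof proceeds in both directions: from a satisfying assignment I assemble $S$ by choosing, in each variable gadget, the track of the true literal together with a canonical ``satisfied'' selection in each clause gadget, and check directly that $G_{\phi}[S]$ is $(\alpha,\beta)$ out-regular and undominated; conversely, from any undominated out-regular $G_{\phi}[S]$ the gadget analyses extract one selected track per variable (defining an assignment) and at least one true literal per clause (showing $\phi$ is satisfied). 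Invoking Theorem~\ref{TheoremEquivalence} and the planarity-preserving correspondence of Lemma~\ref{LemmaCorrespondenceBetweenBimatrixGame&Digraph} then yields Theorem~\ref{TheoremNP-comletenss4PlanarBimatrixGame}.

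The principal obstacle, and the part where the proof will concentrate most of its effort, is the simultaneous satisfaction of three constraints in the gadget design: (i) with only one weight available on $R\to C$ edges the $\alpha$-equations are very rigid, so the topology of each gadget must force the ``exactly one track'' property by structure rather than weights; (ii) the two weights in $\mathcal{B}$ must be chosen so that the $\beta$-equations at clause vertices admit solutions precisely for satisfied clauses, while still forbidding unintended out-regular configurations spanning several gadgets; and (iii) the resulting graph, including the gluing at shared variable vertices and any crossover subgadgets, must admit a planar embedding. Verifying the undominatedness condition -- showing that for every candidate $S$ corresponding to a satisfying assignment no outside vertex has weighted out-degree into $S$ exceeding $\alpha$ or $\beta$ -- is a routine but careful case analysis on the gadget boundary.
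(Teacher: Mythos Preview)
Your proposal is a plan rather than a proof: you describe the \emph{properties} the variable and clause gadgets should have, but you never actually specify them, and the correctness claims (``the $\beta$-equation has a solution iff at least one literal track is in $S$'', ``a toggle plus weighted $C\to R$ arcs forces exactly one track'') remain assertions with no construction behind them. Since the whole difficulty of the lemma is precisely to exhibit such gadgets using only the very restricted weight set $\{1\}$ on the $R$-side and two values on the $C$-side, this is a genuine gap.

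There is also a concrete misreading that leads your sketch in the wrong direction. The gadget of Figure~\ref{FigureClauseVariableGadget} is \emph{not} a clause--literal connector to be instantiated once per literal occurrence; in the paper it is a \emph{crossover} gadget, inserted once per crossing point after the non-planar graph $G_{\phi}$ has already been built. The actual reduction is done in two stages. First one constructs a (non-planar) $\langle 1,2\rangle$ bipartite digraph $G_{\phi}$ with weights $1$ and $m+n$: a single global vertex $a$ points (with weight~$1$) to every clause vertex $C_{j}$ and to every $z_{i_1}$; each clause $C_{j}$ fans out through $v_{j_k},u_{j_k}$ to its literal vertex; and each variable has the gadget of Figure~\ref{FigureConstructionGadgets}, where $z_{i_2}$ points with weight $m+n$ to both $y_{i}$ and $\bar{y}_{i}$. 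The key structural facts are that (i) removing $a$ makes $G_{\phi}$ acyclic, so by Lemma~\ref{LemmaImportantProperty} every undominated out-regular $S$ contains $a$, which in turn forces $\beta=m+n$ and hence $C_{j},z_{i_1}\in S$ for all $j,i$; and (ii) if both $y_{i}$ and $\bar{y}_{i}$ were in $S$ then $z_{i_2}$ would dominate, giving the ``at most one literal'' property. Only after this is established does the paper planarize $G_{\phi}$ by replacing each crossing with the gadget of Figure~\ref{FigureClauseVariableGadget} and proving (Lemma~\ref{LemmaPlanarDiGraph}) that undominated out-regular subgraphs survive this replacement in both directions.

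So the mechanism you are looking for is not a local $\beta$-equation at a clause vertex tuned to detect ``$\geq 1$ true literal''; it is the global vertex $a$ whose $m+n$ unit-weight out-arcs pin down $\beta=m+n$ and force every clause into $S$, combined with the $z_{i_2}$ vertices that enforce consistency via the undominatedness condition. Your outline does not contain an analogue of either of these ideas.
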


Here, we sketch the proof of Lemma \ref{LemmaReductionThreeSAT-UndominatedOutRegular}; the full proof of this lemma can be found in Section \ref{SectionProofLemmaReductionThreeSAT-UndominatedOutRegular}.
In the first step, we reduce the problem \ThreeSAT to \UndominatedOutRegular{1}{2} in Section \ref{SectionFirstStep}.
For each \ThreeSAT instance $\phi: \{ 0, 1 \}^n \to \{ 0, 1 \}$, we construct the $\langle 1, 2 \rangle$-edge-weighted bipartite digraph $G_{\phi}$. Note that $G_{\phi}$ is not planar; it may have some crossing points.
Hence, we eliminate all crossing points from the graph $G_{\phi}$ in the second step. The key technique to eliminate crossing points is the construction of the clause-variable gadget, shown in Figure \ref{FigureClauseVariableGadget}. Denote by $H_{\phi}$ the planar digraph formed by transforming $G_{\phi}$.
To complete the proof of Lemma \ref{LemmaReductionThreeSAT-UndominatedOutRegular}, we prove that $G_{\phi}$ has an undominated out-regular subgraph if and only if $H_{\phi}$ also has it.

\section{Proof of Lemma \ref{LemmaReductionThreeSAT-UndominatedOutRegular}}
\label{SectionProofLemmaReductionThreeSAT-UndominatedOutRegular}

\subsection{Reduction from \ThreeSAT to \UndominatedOutRegular{1}{2}}
\label{SectionFirstStep}
Let $\phi : \{ 0, 1 \}^n \to \{ 0, 1 \}$ be a $3$-CNF formula.
Without loss of generality, we assume that the $3$-CNF formula $\phi$ holds the following three properties:
\begin{itemize}
	\item There are $m$ clauses.
	\item Each clause has exactly three literals.
	\item Each literal is contained in at least one clause.
\end{itemize}
This subsection constructs an edge-weighted bipartite digraph $G_{\phi}$ such that every undominated out-regular subgraph leads to a satisfying assignment of $\phi$. We show how to embed the digraph $G_{\phi}$ into a plane in Section \ref{SectionSecondStep}.

The set of vertices of $G_{\phi}$ consists of the followings:
\begin{itemize}
	\item for each $i \in [n]$, variable vertices $x_i$, $\bar{x}_{i}$ and variable coordinating vertices $y_i$, $\bar{y}_{i}$, $z_{i_1}$, $z_{i_2}$,
	\item for each $j \in [m]$, clause vertex $C_{j}$ and clause coordinating vertices $v_{j_1}$, $v_{j_2}$, $v_{j_3}$, $u_{j_1}$, $u_{j_2}$, $u_{j_3}$, and
	\item an additional vertex $a$.
\end{itemize}

\noindent
We separate these vertices into two parts: 
\begin{itemize}
	\item the first part $\mathcal{V}_{R} = \{ y_{i}, \bar{y}_{i}, z_{i_1} ~;~ i \in [n] \} \cup \{ C_{j}, u_{j_1}, u_{j_2}, u_{j_3} ~;~ j \in [m] \}$ and
	\item the second part $\mathcal{V}_{C} = \{ x_{i}, \bar{x}_{i}, z_{i_2} ~;~ i \in [n] \} \cup \{ v_{j_{1}}, v_{j_2}, v_{j_3} ~;~ j \in [m] \} \cup \{ a \}$.
\end{itemize}

First, we describe the arcs going out from clause vertices and clause coordinating vertices.
For a $j \in [m]$, we denote by $\ell_{j_1}$, $\ell_{j_2}$, $\ell_{j_3}$ the three literals contained in the clause $C_{j}$.
For each $k \in \{ 1, 2, 3 \}$, we create the following three arcs: $(C_{j}, v_{j_k})$, $(v_{j_{k}}, u_{j_k})$, and $(u_{l_k}, \ell_{j_k})$.
The weights of these arcs are defined as $w(C_{j}, v_{j_k}) = 1$, $w(v_{j_{k}}, u_{j_k}) = m+n$, and $w(u_{l_k}, \ell_{j_k}) = 1$.
We illustrate an example of our construction in Figure \ref{FigureClauseGadget}.

Second, we describe the arcs going out from the additional vertex $a$.
For every $j \in [m]$, we create the arc $(a, C_{j})$, and define its weight as $w(a, C_{j}) = 1$.
Furthermore, for each $i \in [n]$, we create the arc $(a, z_{i_1})$, and define its weight as $w(a, z_{i_1}) = 1$.

Third, we describe the arcs going out from the variable vertices and variable coordinating vertices.
For each $i \in [n]$, we create the arcs $(x_{i}, y_{i})$, $(y_{i}, a)$, $(\bar{x}_{i}, \bar{y}_{i})$, $(\bar{y}_{i}, a)$, $(a, z_{i_1})$, $(z_{i_1}, z_{i_2})$, $(z_{i_2}, y_{i})$, and $(z_{i_2}, \bar{y}_{i})$. The weights of these arcs are defined as $w(x_{i}, y_{i}) = m+n$, $w(y_{i}, a) = 1$, $w(\bar{x}_{i}, \bar{y}_{i}) = m+n$, $w(\bar{y}_{i}, a) = 1$, $w(a, z_{i_1}) = 1$, $w(z_{i_1}, z_{i_2}) = 1$, $w(z_{i_2}, y_{i}) = m+n$, and $w(z_{i_2}, \bar{y}_{i}) = m+n$. 
We illustrate the construction of arcs for every $i \in [n]$ in Figure \ref{FigureConstructionGadgets}; we call this construction a variable gadget.

Finally, we embed the graph $G_{\phi}$ into a $3m \times 2mn$ grid graph. We arrange all clause coordinating vertices (the type of $u_{j_k}$) to the left and all variable vertices to the bottom. Each variable vertex has $m$ points to connect clause coordinating vertices. At the variable vertices, we label each point to connect the adjacent edge with $1, 2, \dots, m$.
Suppose that the arc $u_{j_k}$ to $x_{i}$ exists. This arc extends horizontally from $u_{j_k}$ to just above the $j$-th connection point of $x_{i}$. The arc is then lowered vertically to the $j$-th point of $x_{i}$. We repeat this procedure to connect every directed edge between clause coordinating vertices and variable vertices. We lexicographically chose a connecting point of a clause vertex without loss of generality.
We illustrate an example of our construction in Figure \ref{FigureExampleSATGraph}.

\begin{figure}
	\centering
	\begin{tikzpicture}[scale = 1.0]
	\tikzset{clause/.style={draw, rectangle, text centered}};
	\tikzset{literal/.style={draw, circle, text centered}};
	
	\node[literal] at (-2, 2) (a) {$a$};
	\node[clause] at (0, 2) (c) {$C_j$};
	\node[literal] at (2, 0) (v1) {$v_{j_1}$};
	\node[literal] at (2, 2) (v2) {$v_{j_2}$};
	\node[literal] at (2, 4) (v3) {$v_{j_3}$};
	\node[clause] at (4, 0) (u1) {$u_{j_1}$};
	\node[clause] at (4, 2) (u2) {$u_{j_2}$};
	\node[clause] at (4, 4) (u3) {$u_{j_3}$};
	\node[literal] at (6, 0) (l1) {$\ell_{j_1}$};
	\node[literal] at (6, 2) (l2) {$\ell_{j_2}$};
	\node[literal] at (6, 4) (l3) {$\ell_{j_3}$};

	\draw[->, thick, blue] (a) -- (c);
	\draw[->, thick, blue] (c) -- (v1);
	\draw[->, thick, blue] (c) -- (v2);
	\draw[->, thick, blue] (c) -- (v3);
	\draw[->, dashed, red] (v1) -- (u1);
	\draw[->, dashed, red] (v2) -- (u2);
	\draw[->, dashed, red] (v3) -- (u3);
	\draw[->, thick, blue] (u1) -- (l1);
	\draw[->, thick, blue] (u2) -- (l2);
	\draw[->, thick, blue] (u3) -- (l3);
	
\end{tikzpicture}
	\caption{The Construction of arcs for every $j \in [m]$. Here, weights on \textcolor{red}{dashed} and \textcolor{blue}{thick} edges are \textcolor{red}{$m+n$} and \textcolor{blue}{$1$}, respectively.}
	\label{FigureClauseGadget}
\end{figure}
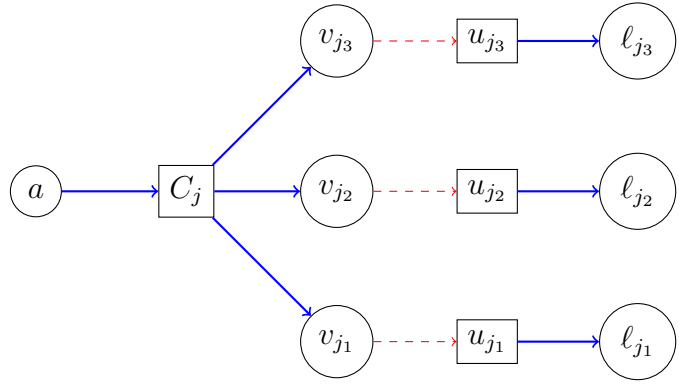

\begin{figure}
	\begin{tikzpicture}[scale = 0.6]
	\tikzset{clause/.style={draw, rectangle, text centered}};
	\tikzset{literal/.style={draw, circle, text centered}};

	\node[clause] at (0, 1) (c11) {$u_{1_1}$};
	\node[clause] at (0, 2) (c12) {$u_{1_2}$};
	\node[clause] at (0, 3) (c13) {$u_{1_3}$};

	\node[clause] at (0, 4) (c21) {$u_{2_1}$};
	\node[clause] at (0, 5) (c22) {$u_{2_2}$};
	\node[clause] at (0, 6) (c23) {$u_{2_3}$};

	\node[clause] at (0, 7) (c31) {$u_{3_1}$};
	\node[clause] at (0, 8) (c32) {$u_{3_2}$};
	\node[clause] at (0, 9) (c33) {$u_{3_3}$};
	
	\draw[->, thick, blue] (c11) -- (1, 1) -- (1, 0);
	\draw[->, thick, blue] (c12) -- (10, 2) -- (10, 0);
	\draw[->, thick, blue] (c13) -- (13, 3) -- (13, 0);
	
	
	\draw[->, thick, blue] (c21) -- (8, 4) -- (8, 0);
	\draw[->, thick, blue] (c22) -- (17, 5) -- (17, 0);
	\draw[->, thick, blue] (c23) -- (20, 6) -- (20, 0);
	
	
	\draw[->, thick, blue] (c31) -- (6, 7) -- (6, 0);
	\draw[->, thick, blue] (c32) -- (12, 8) -- (12, 0);
	\draw[->, thick, blue] (c33) -- (24, 9) -- (24, 0);
	
	
	\draw[-, thick] (0.8, 0) -- (3.2, 0) -- (3.2, -1) -- (0.8, -1) -- (0.8, 0);
	\node at (2, -0.5) {$x_1$};
	\draw[-, thick] (3.8, 0) -- (6.2, 0) -- (6.2, -1) -- (3.8, -1) -- (3.8, 0);
	\node at (5, -0.5) {$\bar{x}_1$};
	
	\draw[-, thick] (6.8, 0) -- (9.2, 0) -- (9.2, -1) -- (6.8, -1) -- (6.8, 0);
	\node at (8, -0.5) {$x_2$};
	\draw[-, thick] (9.8, 0) -- (12.2, 0) -- (12.2, -1) -- (9.8, -1) -- (9.8, 0);
	\node at (11, -0.5) {$\bar{x}_2$};
	
	\draw[-, thick] (12.8, 0) -- (15.2, 0) -- (15.2, -1) -- (12.8, -1) -- (12.8, 0);
	\node at (14, -0.5) {$x_3$};
	\draw[-, thick] (15.8, 0) -- (18.2, 0) -- (18.2, -1) -- (15.8, -1) -- (15.8, 0);
	\node at (17, -0.5) {$\bar{x}_3$};
	
	\draw[-, thick] (18.8, 0) -- (21.2, 0) -- (21.2, -1) -- (18.8, -1) -- (18.8, 0);
	\node at (20, -0.5) {$x_4$};
	\draw[-, thick] (21.8, 0) -- (24.2, 0) -- (24.2, -1) -- (21.8, -1) -- (21.8, 0);
	\node at (23, -0.5) {$\bar{x}_4$};
	
\end{tikzpicture}
	\caption{The digraph constrcuted from a given \ThreeSAT instance $\phi(x) = (x_1 \vee \bar{x}_{2} \vee x_3) \wedge (x_2 \vee \bar{x}_{3} \vee x_{4} ) \wedge ( \bar{x}_{1} \vee x_3 \vee \bar{x}_{4} )$, where $C_1 =  x_1 \vee \bar{x}_{2} \vee x_3$, $C_2 = x_2 \vee \bar{x}_{3} \vee x_{4}$, and $C_3 = \bar{x}_{1} \vee x_3 \vee \bar{x}_{4}$.}
	\label{FigureExampleSATGraph}
\end{figure}
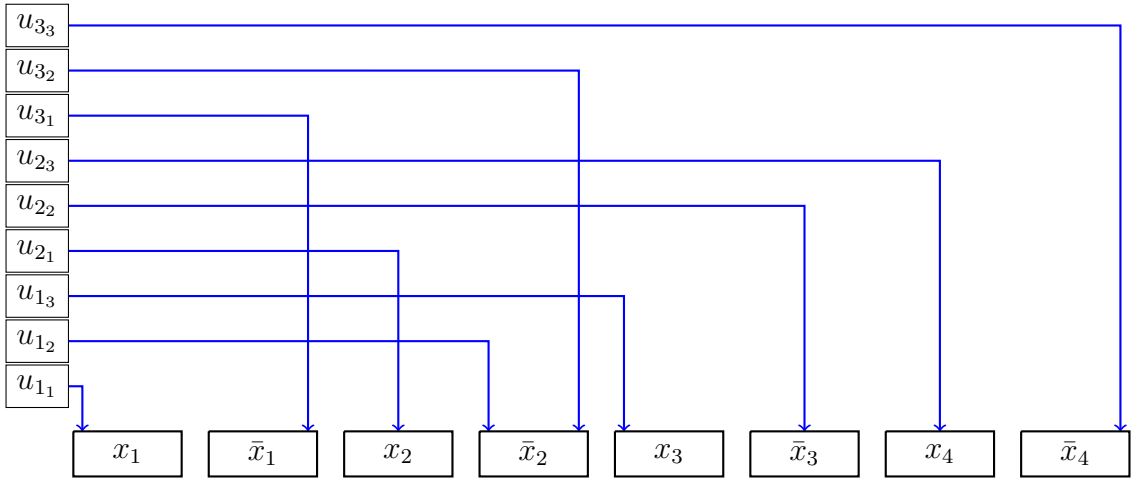

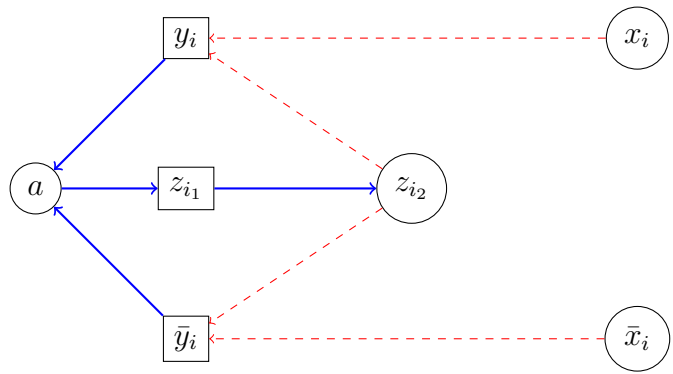
\begin{figure}
	\centering
	\begin{tikzpicture}[scale = 1.0]
	\tikzset{clause/.style={draw, rectangle, text centered}};
	\tikzset{literal/.style={draw, circle, text centered}};
	
	\node [literal] at (-4, 2) (a) {$a$};
	\node [clause] at (-2, 2) (z) {$z_{i_1}$};
	\node [literal] at (1, 2) (w) {$z_{i_2}$};
	\node [literal] at (4, 4) (tx) {$x_i$};
	\node [literal] at (4, 0) (nx) {$\bar{x}_{i}$};
	\node [clause] at (-2, 4) (ty) {$y_i$};
	\node [clause] at (-2, 0)  (ny) {$\bar{y}_{i}$};
	
	\draw[->, thick, blue] (a) -- (z); 
	\draw[->, thick, blue] (z) -- (w); 
	\draw[->, dashed, red] (w) -- (ty); 
	\draw[->, dashed, red] (w) -- (ny); 
	\draw[->, dashed, red] (tx) -- (ty); 
	\draw[->, thick, blue] (ty) -- (a); 
	\draw[->, dashed, red] (nx) -- (ny); 
	\draw[->, thick, blue] (ny) -- (a); 
	
\end{tikzpicture}
	\caption{The construction of the variable gadget. Here, weights on \textcolor{red}{dashed} and \textcolor{blue}{thick} edges are \textcolor{red}{$m+n$} and \textcolor{blue}{$1$}, respectively.}
	\label{FigureConstructionGadgets}
\end{figure}

We complete the construction of $G_{\phi}$. 
From the construction, we can straightforwardly see the following three facts:
\begin{enumerate}[label = (\roman*)]
	\item Every arc is either going from $\mathcal{V}_{R}$ to $\mathcal{V}_{C}$ or from $\mathcal{V}_{C}$ to $\mathcal{V}_{R}$.
	\item The weight on an arc outgoing from a vertex in $\mathcal{V}_{R}$ is one.
	\item The weight on an arc outgoing from a vertex in $\mathcal{V}_{C}$ is either one or $m + n$.
\end{enumerate}
Thus, the digraph $G_{\phi}$ is a $\langle 1, 2 \rangle$-edge-weighted bipartite digraph.
Hence, the corresponding two-player game is a $\langle 1, 2 \rangle$-bimatrix game.
Now, we prove the following theorem. 

\begin{theorem}
	\label{TheoremSATiffUndominatedOutregular}
	A Boolean formula $\phi$ has a satisfying assignment if and only if the graph $G_{\phi}$ has an undominated out-regular subgraph.
\end{theorem}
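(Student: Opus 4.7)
The overall plan is to exploit the large weight gap between $1$ and $m+n$: I will show that any undominated $(\alpha,\beta)$ out-regular subgraph of $G_\phi$ must have $(\alpha,\beta)=(1,m+n)$, and that in this case the choice of $S$ encodes precisely a satisfying assignment. The auxiliary vertex $a$ plays the role of a global accountant linking the variable gadgets and the clause gadgets.

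For the \emph{if} direction, given a satisfying assignment $\tau$ I would take $S$ to consist of $a$, every vertex $C_j$, every $z_{i_1}$ and $z_{i_2}$, the vertex $y_i$ together with $x_i$ when $\tau(x_i)=1$ and otherwise $\bar{y}_i$ together with $\bar{x}_i$, and, for each clause $C_j$, one pair $v_{j_k},u_{j_k}$ with $\ell_{j_k}$ satisfied by $\tau$. A direct verification shows that $G_\phi[S]$ is $(1,m+n)$ out-regular (in particular, $a$'s out-weight $m$-from-clauses plus $n$-from-$z_{i_1}$'s equals $m+n$), and undominance is immediate because every vertex of $\mathcal{V}_R\setminus S$ has out-degree at most one and every vertex of $\mathcal{V}_C\setminus S$ sends its lone $(m+n)$-weight edge to a single vertex.

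For the converse, let $G_\phi[S]$ be an undominated $(\alpha,\beta)$ out-regular subgraph. I would first observe that $G_\phi$ is strongly connected (each vertex reaches $a$ through its gadget, and $a$ reaches every vertex via a clause containing the corresponding literal, using the assumption that each literal appears in some clause), so Lemma~\ref{LemmaImportantProperty} guarantees $\alpha,\beta>0$. Since every $\mathcal{V}_R$-vertex has weight-$1$ out-edges and out-degree at most one (except $C_j$), we get $\alpha=1$; the only alternative, $S_R\subseteq\{C_j:j\in[m]\}$, would force some $v_{j_k}\in S$ whose sole out-neighbor $u_{j_k}$ lies outside $S$, contradicting $\beta>0$. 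The constraint $\alpha=1$ triggers a cascade of implications (for instance $z_{i_1}\in S\Rightarrow z_{i_2}\in S$, $v_{j_k}\in S\Rightarrow u_{j_k}\in S\Rightarrow \ell_{j_k}\in S$, and $y_i,\bar{y}_i\in S\Rightarrow a\in S$); assuming $a\notin S$ collapses the cascade to $S=\emptyset$, so $a\in S$ and $\beta\le m+n$.

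The crux is to pin down $\beta=m+n$: if instead $\beta<m+n$, every $\mathcal{V}_C$-vertex whose out-edges all have weight $m+n$ (namely $x_i,\bar{x}_i,z_{i_2}$ and every $v_{j_k}$) must have in-$S$ out-weight $0$, which by Lemma~\ref{LemmaImportantProperty} excludes them from $S$; the same cascade then forces $S_C\subseteq\{a\}$ and $S_R\subseteq\{y_i,\bar{y}_i\}$, yielding out-weight $0$ for $a$ and contradicting $\beta>0$. Hence $\beta=m+n$, which forces every $C_j$ and every $z_{i_1}$ into $S$, hence every $z_{i_2}$, and each $z_{i_2}\in S$ forces \emph{exactly} one of $\{y_i,\bar{y}_i\}$ into $S$. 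Defining $\tau(x_i)=1$ iff $y_i\in S$ then gives a truth assignment, and for each $C_j$ the unique selected $v_{j_k}\in S$ propagates through $u_{j_k}$ to $\ell_{j_k}\in S$, which via the variable gadget implies that $\ell_{j_k}$ is true under $\tau$. The main obstacle is precisely this rigidity argument --- pinning $(\alpha,\beta)$ to $(1,m+n)$ is where the weight-gap design does the real work, ruling out every spurious partial configuration that is not induced by a satisfying assignment.
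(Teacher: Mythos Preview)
Your proposal is correct and follows essentially the same approach as the paper: both directions construct or decode the same subset $S$, and the converse hinges on using strong connectivity together with Lemma~\ref{LemmaImportantProperty} to force $(\alpha,\beta)=(1,m+n)$ and then read off a satisfying assignment. The only noteworthy differences are stylistic --- you pin down $\alpha=1$ explicitly and obtain $a\in S$ by a cascade argument, whereas the paper leaves $\alpha$ implicit and argues $a\in S$ by observing that $G_\phi\setminus\{a\}$ is acyclic; your treatment is arguably the cleaner of the two.
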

\begin{proof}
	First, we assume that $\phi$ is satisfiable, i.e., there exists an assignment $\xi \in \{ 0, 1 \}^{n}$ such that $\phi(\xi) = 1$.
	Then, we construct the subset of vertices $S \subseteq V(G_{\phi})$ such that 
	\begin{enumerate}
		\item for each $i \in [n]$, $x_{i}, y_{i} \in S$ if $\xi_{i} = 1$; otherwise, $\bar{x}_{i}, \bar{y}_{i} \in S$
		\item for each $i \in [n]$, $z_{i_1}, z_{i_2} \in S$
		\item for each $j \in [m]$, $C_{j} \in S$
		\item for each $j \in [m]$, we select exactly one literal $\ell_{j_{k}} \in S$ contained in $C_{j}$, and put clause coordinating vertices $v_{j_k}$ and $u_{j_k}$ into $S$.
		\item the additional vertex $a$ in $S$.
	\end{enumerate}

	What remains is to prove that the subset $S$ induces an undominated out-regular subgraph of $G_{\phi}$.
	
	First, we show that $G_{\phi}[S]$ is out-regular. Each vertex in $S \cap \mathcal{V}_{R}$ has exactly one out-neighbor, and the corresponding arc has the weight of one. Each vertex in $S \cap \mathcal{V}_{C}$ except the additional vertex $a$ has exactly one out-neighbor, and the corresponding arc has the weight of $m + n$. The additional vertex $a$ has $m + n$ out-neighbors, and each arc has the weight of one, which implies that the total weight is equal to $m + n$. Therefore, $G[S]$ is a $(1, m+n)$ out-regular induced subgraph.
	
	Next, we prove that $G_{\phi}[S]$ is undominated. Note that the only vertex that may dominate $G_{\phi}[S]$ is $z_{i_2}$ for any $i \in [n]$. For each $i \in [n]$, the variable coordinating vertex $z_{i_2}$ does not dominate $G_{\phi}[S]$ since $S$ contains at most one of $x_{i}$ and $\bar{x}_{i}$ that are out-neighbors of $z_{i_2}$. Hence, $G_{\phi}[S]$ is an undominated $(1, m+n)$ out-regular induced subgraph.
	
	Therefore, the graph $G_{\phi}$ has an undominated out-regular induced subgraph if the $3$-CNF formula $\phi$ has a satisfiable assignment.
	
	Next, we assume that the subset $S$ of the vertices on the digraph $G_{\phi}$ induces an undominated out-regular subgraph.
	Before we describe the construction of a satisfiable assignment of $\phi$, we observe the properties of the subset $S$. More specifically, we prove the following properties:
	\begin{itemize}
		\item the additional vertex $a \in S$,
		\item for each $j \in [m]$, $C_{j} \in S$,
		\item for each $i \in [n]$, $z_{i_1}, z_{i_2} \in S$,
		\item for each $i \in [n]$, at most one of $x_{i}$ and $\bar{x}_{i}$ is contained in $S$.
	\end{itemize}
	
	Note that the digraph $G_{\phi}$ is strongly connected, and every directed edge has a weight of at least one. 
	Therefore, we have $\sum_{u \in \Gamma_{S}^{+}(v)} w(v, u) > 0$ for every vertex $v \in S$ from Lemma \ref{LemmaImportantProperty}.
	Furthermore, we can see that the additional vertex $a$ is contained in $S$. The graph given by removing the additional vertex $a$ from the digraph $G_{\phi}$ is an acyclic digraph. Note that any induced subgraph $H[S]$ on an acyclic digraph $H$ has a vertex $v$ such that $\Gamma_{S}^{+}(v) = \emptyset$, which implies that $\sum_{u \in \Gamma_{S}^{+}(v)} w(v, u) = 0$.
	
	The subset $S$ contains at least one out-neighbor of the additional vertex $a$.
	When a clause vertex $C_{j}$ is contained in $S$, there exists exactly one $k \in \{ 1, 2, 3 \}$ such that $S$ contains the set of clause coordinating vertices $\{ v_{j_k}, u_{j_k} \}$. We write $x_{i}$ that is the out-neighbor of $u_{j_k}$, without loss of generality. From Lemma \ref{LemmaImportantProperty}, $x_{i}$ and the variable coordinating vertex $y_{i}$ are  also contained in $S$.
	On the other hand, when a variable coordinating vertex $z_{i_1}$ is contained in $S$, the variable coordinating vertex $z_{i_1}$ and at least one of its out-neighbor are also contained in $S$ from Lemma \ref{LemmaImportantProperty}.
	These facts implies that all out-neighbors of $a$ are contained in $S$ since the number of $a$'s out-neibhgors is $m + n$, which is equals to the weight on arcs $(x_{i}, y_{i})$ and $(z_{i_2}, y_{i})$.
	
	Note that for each $i \in [n]$, the subset $S$ contains at most one of variable coordinating vertices $y_{i}$ and $\bar{y}_{i}$. For the sake of contradiction, we assume that there is $i \in [n]$ such that both $y_{i}$ and $\bar{y}_{i}$ are contained in $S$. In this case, the variable coordinating vertex $z_{i_2}$ dominates the subset $S$ because the total weight of $x_{i}$ is $m + n$ but the total weight of $z_{i_2}$ is $2m + 2n$.
	Hence, for each $i \in [n]$, at most one of $x_{i}$ and $\bar{x}_{i}$ is contained in $S$. If the variable vertex $x_{i}$ is in $S$, the variable coordinating vertex $y_{i}$ is also in $S$ from Lemma \ref{LemmaImportantProperty}. Similarly, if $\bar{x}_{i}$ is in $S$, $\bar{y}_{i}$ is also in $S$.

	From now on, we describe how to construct a satisfiable assignment of $\phi$. We can easily find such an assignment $\xi \in \{ 0, 1 \}^{n}$. For each $i \in [n]$, we define $\xi_{i} = 1$ if and only if $x_{i} \in S$.
	
	What remains is to prove that the assignment $\xi$ holds that $\phi(\xi) = 1$. 
	It is easy to see that every clause vertex is contained in the subset $S$. For each clause vertex $C_{j}$, exactly one literal $\ell_{j_k}$ is in $S$. By the definition of $\xi$, if the literal $\ell_{j_k}$ corresponds to a variable vertex $x_{i}$, then $\xi_{i} = 1$; otherwise, $\xi_{i} = 0$. This implies that the assignment $\xi$ is a satisfiable assignment of $\phi$ since every clause contains at least one TRUE assignment.
\end{proof}

\subsection{Embedding into plane}
\label{SectionSecondStep}
In this section, we embed the graph $G_{\phi}$ constructed in Section \ref{SectionFirstStep} into a plane. Thus, we eliminate crossing points.
From simple observation, the only directed edges that may not be planar are those between the variable vertices and the clause coordinating vertices.
To eliminate every crossing point, we construct a clause-variable gadget shown in Figure \ref{FigureClauseVariableGadget} and insert this gadget into each crossing point.

\begin{figure}
	\centering
	\begin{tikzpicture}
	\tikzset{clause/.style={draw, rectangle, text centered}};
	\tikzset{literal/.style={draw, circle, text centered}};
		
	\node [literal] at (0, 0) (a) {$\epsilon$};
	
	\node [clause] at (0, 2) (ct1) {$\alpha_1$};
	\node [literal] at (0, 4) (vt1) {$\alpha_2$};
	\node [clause] at (0, 6) (ct2) { };
	
	\node [clause] at (-2, 0) (cl1) {$\beta_1$};
	\node [literal] at (-4, 0) (vl1) {$\beta_2$};
	\node [clause] at (-6, 0) (cl2) { };
	
	\node [clause] at (-2, -2) (cb11) {$\gamma_1^1$};
	\node [literal] at (-2, -4) (vb21) {$\gamma_2^1$};	
	\node [clause] at (-1, -2) (cb12) {$\gamma_1^2$};
	\node [literal] at (-1, -4) (vb22) {$\gamma_2^2$};
	\node [clause] at (1, -2) (cb1m) {$\gamma_1^{m+n}$};
	\node [literal] at (1, -4) (vb2m) {$\gamma_2^{m+n}$};
	\node at (0, -2) {$\cdots$};
	\node at (0, -3) {$\cdots$};
	\node at (0, -4) {$\cdots$};	
	\node [clause] at (0, -6) (cb3) {$\gamma_3$};
	
	\node [clause] at (2, 2) (cr11) {$\delta_1^1$};
	\node [literal] at (4, 2) (vr21) {$\delta_2^1$};
	\node [clause] at (2, 1) (cr12) {$\delta_1^2$};
	\node [literal] at (4, 1) (vr22) {$\delta_2^2$};
	\node [clause] at (2, -1) (cr1m) {$\delta_1^{m+n}$};
	\node [literal] at (4, -1) (vr2m) {$\delta_2^{m+n}$};
	\node at (2, 0) {$\vdots$};
	\node at (3, 0) {$\vdots$};
	\node at (4, 0) {$\vdots$};	
	\node [clause] at (6, 0) (cr3) {$\delta_3$};

	\draw [->, thick, blue] (ct2) -- (vt1);
	\draw [->, dashed, red] (vt1) -- (ct1);
	\draw [->, thick, blue] (ct1) -- (a);
	
	\draw [->, thick, blue] (cl2) -- (vl1);
	\draw [->, dashed, red] (vl1) -- (cl1);
	\draw [->, thick, blue] (cl1) -- (a);
	
	\draw [->, thick, blue] (a) -- (cb11);
	\draw [->, thick, blue] (a) -- (cb12);
	\draw [->, thick, blue] (cb11) -- (vb21);
	\draw [->, thick, blue] (cb12) -- (vb22);
	\draw [->, dashed, red] (vb21) -- (cb3);
	\draw [->, dashed, red] (vb22) -- (cb3);
	\draw [->, thick, blue] (a) -- (cb1m);
	\draw [->, thick, blue] (cb1m) -- (vb2m);
	\draw [->, red, dashed] (vb2m) -- (cb3);
	\draw [->, thick, blue] (vl1) -- (cb11);
	\draw [->, blue, thick] (cb3) -- (0, -7);
	
	\draw [->, thick, blue] (a) -- (cr11);
	\draw [->, thick, blue] (a) -- (cr12);
	\draw [->, thick, blue] (cr11) -- (vr21);
	\draw [->, thick, blue] (cr12) -- (vr22);
	\draw [->, dashed, red] (vr21) -- (cr3);
	\draw [->, dashed, red] (vr22) -- (cr3);
	\draw [->, thick, blue] (vt1) -- (cr11);
	\draw [->, thick, blue] (a) -- (cr1m);
	\draw [->, thick, blue] (cr1m) -- (vr2m);
	\draw [->, dashed, red] (vr2m) -- (cr3);
	\draw [->, thick, blue] (cr3) -- (7, 0);
\end{tikzpicture}
	\caption{The Construction of the Clause-Variable Gadget. Here, weights on \textcolor{red}{dashed} and \textcolor{blue}{thick} edges are \textcolor{red}{$m+n$} and \textcolor{blue}{$1$}, respectively.}
	\label{FigureClauseVariableGadget}
\end{figure}
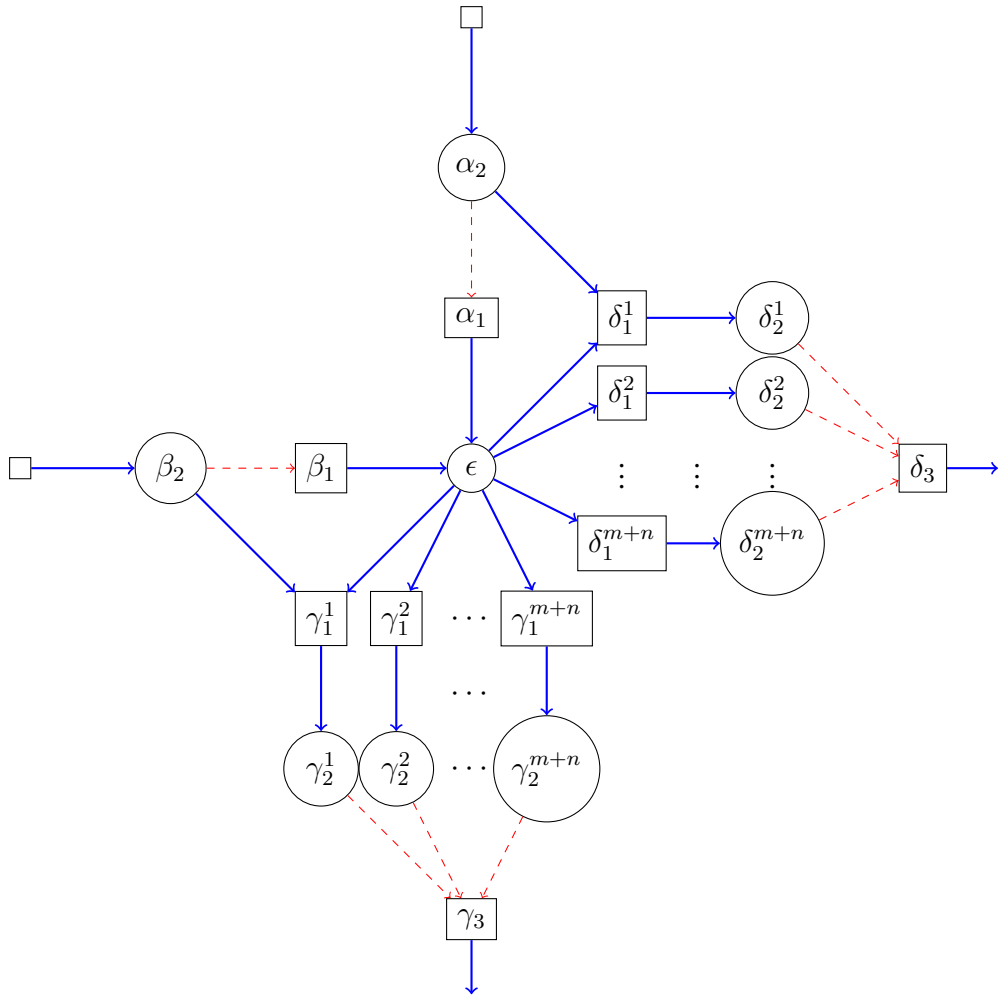

We now describe the insertion of the clause-variable gadget in more detail\footnote{When an arc has more than one crossing point, we repeat the same operation, replacing $v_{j_{k}}$ and $v_{j_{k}'}$ as $\gamma_{3}$ and $\delta_{3}$ appropriately}.
Suppose that arcs from $v_{j_k}$ to $\ell_{j_{k}}$ and from $v_{j_{k}'}$ to $\ell_{j_{k}'}$ intersect. Without loss of generality, we assume that the arc from $v_{j_{k}}$ to $\ell_{j_{k}}$ crosses vertically.
Then, in addition to the arcs on the clause-variable gadgets, we create the following four arcs $(v_{j_{k}}, \alpha_{2})$, $(v_{j_{k}'}, \beta_{2})$, $(\gamma_{3}, \ell_{j_{k}})$, and $(\delta_{3}, \ell_{j_{k}'})$.

By the construction, it is obvious that the clause-variable gadget is planar.
Hence, we can convert the digraph $G_{\phi}$ to the planar digraph by inserting the clause variable gadget into each crossing point. Naturally, our reconstruction can be done in polynomial time since the clause-variable gadget has at most $4m + 4n + 6$ vertices, and the number of crossing points is bounded by some polynomial.

We denote by $H_{\phi}$ the planar digraph constructed from $G_{\phi}$ by the above procedure.
What remains is to prove that $H_{\phi}$ has an undominated out-regular subgraph if and only if the Boolean formula $\phi$ has a satisfying assignment.
\begin{lemma}
\label{LemmaPlanarDiGraph}
	The graph $G_{\phi}$ has an undominated out-regular subgraph if and only if the graph $H_{\phi}$ has an undominated out-regular subgraph.
\end{lemma}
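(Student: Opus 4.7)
The plan is to reduce to a purely local analysis of each clause-variable gadget. Each inserted gadget has only four external connections (to the pre-replacement endpoints $u_{j_k}, u_{j_{k'}}, \ell_{j_k}, \ell_{j_{k'}}$ of the two crossed arcs), and gadgets at different crossings are pairwise disjoint. Hence it suffices to show, for a single gadget, that the undominated out-regular substructure in $H_\phi$ corresponds to one of four boundary \emph{activity patterns} (namely, whether each replaced arc has both endpoints in the undominated out-regular subgraph), and that each such pattern can be realized by a consistent gadget interior.

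For the ``only if'' direction, given an undominated $(1,m+n)$ out-regular subgraph $G_\phi[S]$, I would set $T \supseteq S$ and, inside each gadget, add the internal vertices prescribed by the activity pattern of the two arcs it replaces: if only the top-branch arc is active, include $\alpha_1,\alpha_2,\epsilon,\gamma_3$ together with all $m+n$ pairs $\gamma_1^k,\gamma_2^k$ (symmetrically if only the other arc is active); if both are active, include $\alpha_1,\alpha_2,\beta_1,\beta_2,\epsilon,\gamma_3,\delta_3$ together with $m+n$ pairs selected from $\{(\gamma_1^k,\gamma_2^k),(\delta_1^k,\delta_2^k) : k \ge 2\}$ with at least one pair drawn from each fan; if neither is active, add no interior gadget vertex. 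In each of the four cases I would verify (i) out-regularity at every interior vertex in $T$ (out-weight $1$ for $\mathcal{V}_R$-vertices, $m+n$ for $\mathcal{V}_C$-vertices) and (ii) that no interior vertex of the gadget outside $T$ has out-weight strictly exceeding its threshold. For the converse, given an undominated out-regular $H_\phi[T]$, define $S := T \cap V(G_\phi)$ and verify directly that $G_\phi[S]$ remains out-regular and undominated; the only non-trivial point is that $u_{j_k} \in S$ must imply $\ell_{j_k} \in S$, which follows by chasing the out-regularity constraints: $u_{j_k} \in T$ forces $\alpha_2 \in T$, which forces $\alpha_1,\epsilon \in T$, which (together with the coupling constraints) forces at least one pair $(\gamma_1^k,\gamma_2^k) \in T$, and hence $\gamma_3,\ell_{j_k} \in T$.

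The main obstacle I anticipate is the ``both arcs active'' case. There $\epsilon \in T$ must have out-weight exactly $m+n$ even though it possesses $2(m+n)$ weight-$1$ out-arcs, and the coupling arcs $\alpha_2 \to \delta_1^1$ and $\beta_2 \to \gamma_1^1$ force $\gamma_1^1,\delta_1^1 \notin T$ (otherwise $\alpha_2$ or $\beta_2$, already saturated at out-weight $m+n$ via $\alpha_1$ or $\beta_1$, would exceed its threshold). I would therefore exhibit a partition of $m+n$ chosen indices in $\{2,\ldots,m+n\}$ between the two fans with at least one index used on each side, so that both $\gamma_3$ and $\delta_3$ are fed, and simultaneously show that every \emph{excluded} fan vertex has out-weight to $T$ exactly matching --- never strictly exceeding --- its threshold ($1$ for an excluded $\gamma_1^k$ or $\delta_1^k$, and $m+n$ for an excluded $\gamma_2^k$ or $\delta_2^k$). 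Once this intricate case is verified, the other three activity patterns reduce to routine structural checks.
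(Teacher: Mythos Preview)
Your proposal is correct and follows essentially the same approach as the paper: extend $S$ to $T$ by populating each gadget according to which of the two replaced arcs lie in $G_\phi[S]$ (the paper does this via an iterative procedure that incrementally fills $\epsilon$'s fan rather than your explicit four-case split, but the resulting $T$ is the same), and in the converse direction recover $S$ from $T$ by restriction and chase the gadget constraints to show $u_{j_k}\in T\Rightarrow\ell_{j_k}\in T$. Your identification of the ``both arcs active'' case as the crux, with $\gamma_1^1,\delta_1^1$ forced out and $m+n$ indices split between the two fans so that both $\gamma_3$ and $\delta_3$ are fed, matches the paper's handling exactly.
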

\begin{proof}
	First, we assume that $G_{\phi}$ has an undominated out-regular subgraph $G_{\phi}[S]$, i.e., the subset of vertices $S \subseteq V(G_{\phi})$ induces an undominated out-regular subgraph.
	Then, we show how to construct a subset $T$ of vertices on $H_{\phi}$ that induces the undominated out-regular subgraph $H_{\phi}$. We define the subset $T$ by performing the procedure shown in Algorithm \ref{Algorithm4ConstructingSubsetT}.
	
	\begin{algorithm}
	\caption{A procedure to construct the subset $T$}
	\label{Algorithm4ConstructingSubsetT}
		\begin{algorithmic}[1]
			\State We initialize $T$ with $S$.
			\ForAll {an arc $( v_{j_{k}}, \ell_{i}) \in E(G_{\phi}[S])$ that crosses other arcs on $E(G_{\phi}[S])$}
				\State we add some vertices as follows:
				\If {the path passes through the clause-variable gadget horizontally}
					\State we add the vertices $\beta_{2}$, $\beta_{1}$, $\epsilon$, $\delta_{1}^{m+n}$, $\delta_{2}^{m+n}$, and $\delta_{3}$ to $T$
				\EndIf
				\If {the path passes through the clause-variable gadget vertically}
					\State we add the vertices $\alpha_{2}$, $\alpha_{1}$, $\epsilon$, $\gamma_{1}^{m+n}$, $\gamma_{2}^{m+n}$, and $\gamma_{3}$ to $T$
				\EndIf
			\EndFor
			\ForAll {$\epsilon \in T$}
				\State Set $\kappa = 1$
				\While {$\sum_{u \in \Gamma_{T}^{+}(\epsilon)} w(\epsilon, u) < m + n$}
					\If {$\beta_{1} \in T$}
						\State we add two vertices $\delta_{1}^{m + n - \kappa}$ and $\delta_{2}^{m + n - \kappa}$ to $T$
					\EndIf
					\If {$\alpha_{1} \in T$ and $\sum_{u \in \Gamma_{T}^{+}(\epsilon)} w(\epsilon, u) < m + n$}
						\State we add two vertices $\gamma_{1}^{m + n - \kappa}$ and $\gamma_{2}^{m + n - \kappa}$ to $T$
					\EndIf
					\State $\kappa \leftarrow \kappa + 1$
				\EndWhile
			\EndFor
		\end{algorithmic}	
	\end{algorithm}

	What remains is to prove that the digraph $H_{\phi}[T]$ is an undominated out-regular subgraph.
	By the construction of $H_{\phi}$, the vertices that can dominate $H_{\phi}[T]$ are the variable coordinating vertices $z_{i_2}$ for $i \in [n]$, the vertices $\alpha_{2}$ and $\beta_{2}$ that are on the clause-variable gadget.
	A mentioned in the proof of Theorem \ref{TheoremSATiffUndominatedOutregular}, any variable coordinating vertex $z_{i_2}$ does not dominate the digraph $H_{\phi}[T]$ since at most one of variable vertices $x_{i}$ and $\bar{x}_{i}$ belongs to $S$.
	The vertex $\alpha_{2}$ does not dominate $H_{\phi}[T]$ since at most one of the vertices $\alpha_{1}$ and $\delta_{1}^{1}$ belongs to $T$. Similarly, we see that the vertex $\beta_{2}$ is not a dominating vertex.
	Hence, $H_{\phi}[T]$ is an undominated subgraph. It is straightforward to see that the total weight on out-neighbors is equal to $1$ or $m + n$ for every vertex in $T$. Therefore, $H_{\phi}[T]$ is out-regular. From the above argument, the digraph $H_{\phi}$ has an undominated out-regular subgraph.
	
	Next, we assume that the graph $H_{\phi}$ has an undominated out-regular subgraph $H_{\phi}[T]$, i.e., the subset $T$ induces an undominated out-regular subgraph.
	Then, we show how to construct the subset $S$ of vertices on $G_{\phi}$ that induces an undominated out-regular subgraph. We construct the subgraph $S$ by performing the procedure shown in Algorithm \ref{AlgorithmProcedure4ConstructionT}.
	
	\begin{algorithm}
	\caption{A procedure to construct the subset $S$}
	\label{AlgorithmProcedure4ConstructionT}
		\begin{algorithmic}[1]
			\State We initialize $S$ with $\{ a \} \cup \{ C_{j} ~;~ j \in [m] \}$.
			\ForAll {$i \in [n]$}
				\If {$x_{i} \in T$}
					\State we add the vertices $x_{i}$ and $y_{i}$ to $S$
				\EndIf
				\If {$\bar{x}_{i} \in T$}
					\State we add the vertices $\bar{x}_{i}$ and $\bar{y}_{i}$ to $S$
				\EndIf
			\EndFor
			\ForAll {$ j \in [m]$}
				\If {$u_{j_{k}}$, $v_{j_{k}} \in T$}
					\State we add the vertices $u_{j_{k}}$ and $v_{j_{k}}$ to $S$
				\EndIf
			\EndFor
		\end{algorithmic}
	\end{algorithm}
	
	What remains is to prove that the graph $G_{\phi}[S]$ is an undominated out-regular subgraph.
	First, we observe the next fact: 
	\begin{itemize}
		\item If a clause coordinating vertex $v_{j_k}$ belongs to $T$, there is a variable vertex $x_{i} \in T$ such that the literal $\ell_{j_k}$ that is contained in the clause $C_{j}$ corresponds to $x_{i}$.
	\end{itemize}
	Since the subset $T$ induces the undominated out-regular subgraph $H_{\phi}[T]$, the out-neighbor of $v_{j_k}$ is contained in $T$ whenever $v_{j_k}$ belongs to $T$. Note that a clause coordinating vertex $v_{j_k}$ has only one out-neighbor.
	If the arc outgoing from $v_{j_k}$ does not cross any other arcs, the above fact immediately follows.
	Suppose that the arc outgoing from $v_{j_k}$ crosses other arcs. Thus, the out-neighbor of $v_{j_k}$ is either $\alpha_{2}$ or $\beta_{2}$ on the clause-variable gadget. Without loss of generality, we assume that the out-neighbor of $v_{j_k}$ is $\alpha_{2}$ (the other case follows from the same argument).
	Since again the subset $T$ induces the undominated out-regular subgraph $H_{\phi}[T]$,the vertices $\alpha_{1}$ and $\epsilon$ are contained in $T$. Then, we can see that at least one of vertices $\gamma_{1}^{2}, \gamma_{1}^{3}, \dots, \gamma_{1}^{m+n}$ belongs to $T$ because the vertex $\delta_{1}^{1}$ is not in the subset $T$. If $\delta_{1}^{1}$ is contained in $T$, then the total weight on out-neighbors of $\alpha_{2}$ is not equal to $m + n$, which contradicts the out-regularity of $H_{\phi}[T]$.
	The vertex $\gamma_{3}$ is also contained in $T$ since the subset $T$ contains the vertices $\gamma_{1}^{\kappa}$ and $\gamma_{2}^{\kappa}$ for some $\kappa$.
	By repeating this argument, we see that the above fact follows.
	
	Next, we show that the digraph $G_{\phi}[S]$ is out-regular.
	For each vertex in $S \cap \mathcal{V}_{R}$, the total weight on its out-neighbors is equal to $1$. Furthermore, for each vertex in $S \cap \mathcal{V}_{C}$, the total weight on its out-neighbors is equal to $m + n$. Hence, the induced subgraph $G_{\phi}[S]$ is out-regular.
	
	Finally, we show that the digraph $G_{\phi}[S]$ is undominated.
	For each $i \in [n]$, at most one of variable vertices $x_{i}$ and $\bar{x}_{i}$ is contained in $T$. If not, then the variable coordinating vertex $z_{i_2}$ dominates the digraph $G_{\phi}[S]$. This fact implies that the digraph $G_{\phi}[S]$ is an undominated subgraph.
	From the above argument, the subset $S$ induces an undominated out-regular subgraph.
\end{proof}

%

\section*{Acknowledgments}
This work was supported by JSPS KAKENHI Grant Numbers JP21J10845 and JP20H05795.

\printbibliography[heading = bibintoc]

@inproceedings{AKV05,
  author    = {Timothy G. Abbott and Daniel M. Kane and Paul Valiant},
  title     = {On the Complexity of Two-PlayerWin-Lose Games.},
  booktitle = {Proceedings of the 46th Annual {IEEE} Symposium on Foundations of Computer Science},
  pages     = {113--122},
  publisher = {{IEEE} Computer Society},
  year      = {2005},
  Bdsk-Url-1 = {https://doi.org/10.1109/SFCS.2005.59},
  doi       = {10.1109/SFCS.2005.59}
}

@article{AOV07,
  author    = {Louigi Addario{-}Berry and
               Neil Olver and
               Adrian Vetta},
  title     = {A Polynomial Time Algorithm for Finding Nash Equilibria in Planar Win-Lose Games.},
  journal   = {Journal of Graph Algorithms and Applications},
  volume    = {11},
  number    = {1},
  pages     = {309--319},
  year      = {2007},
  Bdsk-Url-1 = {https://doi.org/10.7155/jgaa.00147},
  doi       = {10.7155/jgaa.00147}
}

@article{BIL08,
  author    = {Vincenzo Bonifaci and
               Ugo Di Iorio and
               Luigi Laura},
  title     = {The complexity of uniform Nash equilibria and related regular subgraph problems.},
  journal   = {Theoretical Computer Science},
  volume    = {401},
  number    = {1-3},
  pages     = {144--152},
  year      = {2008},
  Bdsk-Url-1 = {https://doi.org/10.1016/j.tcs.2008.03.036},
  doi       = {10.1016/j.tcs.2008.03.036}
}

@inproceedings{Coo71,
  author    = {Stephen A. Cook},
  title     = {The Complexity of Theorem-Proving Procedures.},
  booktitle = {Proceedings of the 3rd Annual {ACM} Symposium on Theory of Computing},
  pages     = {151--158},
  publisher = {{ACM}},
  year      = {1971},
  Bdsk-Url-1 = {https://doi.org/10.1145/800157.805047},
  doi       = {10.1145/800157.805047}
}

@inproceedings{CD06,
  author    = {Xi Chen and
               Xiaotie Deng},
  title     = {Settling the Complexity of Two-Player Nash Equilibrium.},
  booktitle = {Proceedings of the 47th Annual {IEEE} Symposium on Foundations of Computer Science},
  pages     = {261--272},
  publisher = {{IEEE} Computer Society},
  year      = {2006},
  Bdsk-Url-1 = {https://doi.org/10.1109/FOCS.2006.69},
  doi       = {10.1109/FOCS.2006.69}
}

@article{CS05,
  author    = {Bruno Codenotti and
               Daniel Stefankovic},
  title     = {On the computational complexity of Nash equilibria for (0, 1) bimatrix games.},
  journal   = {Information Processing Letters},
  volume    = {94},
  number    = {3},
  pages     = {145--150},
  year      = {2005},
  Bdsk-Url-1 = {https://doi.org/10.1016/j.ipl.2005.01.010},
  doi       = {10.1016/j.ipl.2005.01.010}
}

@article{DGP09,
  author    = {Constantinos Daskalakis and
               Paul W. Goldberg and
               Christos H. Papadimitriou},
  title     = {The complexity of computing a Nash equilibrium.},
  journal   = {Communications of the {ACM}},
  volume    = {52},
  number    = {2},
  pages     = {89--97},
  year      = {2009},
  Bdsk-Url-1 = {https://doi.org/10.1145/1461928.1461951},
  doi       = {10.1145/1461928.1461951}
}

@inproceedings{FPT04,
  author    = {Alex Fabrikant and
               Christos H. Papadimitriou and
               Kunal Talwar},
  title     = {The complexity of pure Nash equilibria.},
  booktitle = {Proceedings of the 36th Annual {ACM} Symposium on Theory of Computing},
  pages     = {604--612},
  publisher = {{ACM}},
  year      = {2004},
  Bdsk-Url-1 = {https://doi.org/10.1145/1007352.1007445},
  doi       = {10.1145/1007352.1007445}
}

@article{Lev73,
	author = {Leonid A. Levin},
	title = {Universal sequential search problems.},
	journal = {Problems of Information Transmission},
	year = {1973},
	volume = {9},
	number = {3},
	pages = {265-266}
}

@article{LH64,
	author = {Lemke, C. E. and Howson, Jr., J. T.},
	title = {Equilibrium Points of Bimatrix Games},
	journal = {Journal of the Society for Industrial and Applied Mathematics},
	volume = {12},
	number = {2},
	pages = {413-423},
	year = {1964},
	doi = {10.1137/0112033}
}

@article{Pap94a,
	Author = {Papadimitriou, Christos H.},
	Journal = {Journal of Computer and System Sciences},
	Number = {3},
	Pages = {498-532},
	Title = {On the {C}omplexity of the {P}arity {A}rgument and {O}ther {I}nefficient {P}roofs of {E}xistence.},
	Bds-Url-1 = {https://doi.org/10.1016/S0022-0000(05)80063-7},
	doi = {10.1016/S0022-0000(05)80063-7},
	Volume = {48},
	Year = {1994}
}

@inproceedings{Pap05,
  author    = {Christos H. Papadimitriou},
  title     = {Computing correlated equilibria in multi-player games.},
  booktitle = {Proceedings of the 37th Annual {ACM} Symposium on Theory of Computing},
  pages     = {49--56},
  publisher = {{ACM}},
  year      = {2005},
  Bdsk-Url-1 = {https://doi.org/10.1145/1060590.1060598},
  doi       = {10.1145/1060590.1060598}
}

@inproceedings{SS04,
  author    = {Rahul Savani and
               Bernhard von Stengel},
  title     = {Exponentially Many Steps for Finding a Nash Equilibrium in a Bimatrix Game},
  booktitle = {Proceedings of the 45th Symposium on Foundations of Computer Science},
  pages     = {258--267},
  publisher = {{IEEE} Computer Society},
  year      = {2004},
  Bdsk-Url-1 = {https://doi.org/10.1109/FOCS.2004.28},
  doi       = {10.1109/FOCS.2004.28}
}

\end{document}